\newcommand{\Label}{\mathcal L}
\newcommand{\act}{\mathit{act}}
\newcommand{\sem}[1]{\llbracket #1 \rrbracket}
\newcommand{\Inv}{\mathit{Inv}}
\newcommand{\action}{\xrightarrow{\act}}
\newcommand{\delay}[1]{\xrightarrow{#1}}
\newcommand{\delayd}{\xrightarrow{\delta}}
\newcommand{\xRightarrow}[2][]{\ext@arrow 0359\Rightarrowfill@{#1}{#2}}
\newcommand{\cactiond}[1]{\xrightarrow{#1}_d}
\newcommand{\saction}[1]{\xRightarrow{#1}}
\newcommand{\sactiona}[1]{\xRightarrow{#1}_{\alpha}}
\newcommand{\oarr}{\mathord{\rightarrow}} 
\newcommand{\Ap}{\mathit{Ap}}	
\spnewtheorem*{proofidea}{Proof idea}{\itshape}{\rmfamily}
\begin{document}
\title{ LTL Parameter Synthesis of Parametric Timed Automata }
\author{ Peter Bezděk \and
Nikola Beneš \and
Jiří Barnat \and
Ivana Černá
}
\institute{ Faculty of Informatics, Masaryk University, Brno, Czech Republic\\
	\email{bezdek@mail.muni.cz,\{xbenes3,barnat,cerna\}@fi.muni.cz} }
\maketitle

\begin{abstract}
  The parameter synthesis problem for parametric timed automata 
  is undecidable in general even for very 
  simple reachability properties. In this paper we introduce restrictions on 
  parameter valuations under which the parameter synthesis
  problem is decidable for LTL properties. The investigated bounded integer parameter synthesis problem could be solved using 
  an explicit enumeration of all possible parameter valuations. 
  We propose an alternative symbolic zone-based method for this problem which results in a~faster computation.
  Our technique extends the ideas of the 
  automata-based approach to LTL model checking of timed automata.
  To justify the usefulness of our approach, we provide experimental evaluation and compare our method with explicit enumeration technique.
   
\end{abstract}

\section{Introduction} 
Model checking~\cite{MC99} is a~formal verification technique applied to check
for logical correctness of discrete distributed systems. While it is often used
to prove the unreachability of a~bad state (such as an assertion violation in a
piece of code), with a~proper specification formalism, such as the \emph{Linear
  Temporal Logic} (LTL), it can also check for many interesting liveness
properties of systems, such as repeated guaranteed response, eventual stability,
live-lock, etc.

Timed automata have been introduced in~\cite{alur94theory} and have emerged as a~useful formalism for modelling time-critical systems as found in many embedded and cyber-physical systems. 
The formalism is built on top of the standard finite automata enriched with a~set of real-time clocks and allowing the system actions to be guarded with respect to the clock valuations. In the general case, such a~timed system exhibits infinite-state semantics (the clock domains are continuous).
Nevertheless, when the guards are limited to comparing clock values with integers only, there exists a~bisimilar finite state representation of the original infinite-state real-time system referred to as the region abstraction. 
A practically efficient abstraction of the infinite-state space came with the so called zones~\cite{daws1998model}. 
The zone-based abstraction is much coarser and the number of zones \emph{reachable} from the initial state is significantly smaller. 
This in turns allows for an~efficient implementation of verification tools for timed automata, see e.g.~UPPAAL~\cite{behrmann2006uppaal}.


Very often the correctness of a~time-critical system relates to a~proper timing,
i.e.~it does not only depend on the logical result of the computation, but also
on the time at which the results are produced. To that end the designers are not
only in the need of tools to verify correctness once the system is fully
designed, but also in the need of tools that would help them derive proper
time parameters of individual system actions that would make the system as a
whole satisfy the required specification. After all this problem of
\emph{parameter synthesis} is more urgent in practice than the verification as
such.

\subsubsection{Related Work.}
The problem of the existence of a~parameter valuation for a~reachability property of a~parametric timed automaton in continuous time has been shown to be undecidable 
in~\cite{alur1993parametric,miller2000decidability} for a~parametric timed automaton with as few as 3~clocks. This problem remains undecidable even for integer-valued parameters~\cite{benes2015}.
A~solution for the parameter synthesis problem and reachability properties is presented in~\cite{hune02linear} where the authors provide a semi-decision algorithm which is not guaranteed to terminate in all cases. Authors also introduce a~subclass of parametric timed automata, called L/U automata for which the emptiness problem is decidable.
Decidability results for the class of L/U automata are further extended in~\cite{bozzelli2009decision}. In particular, the authors show that emptiness, finiteness and universalitity problems of the set of parameter valuations for which there is an infinite accepting run are decidable. 

To obtain a~decidable version of parameter synthesis problem for parametric timed automata we need to restrict parameter valuations to bounded integers.
When modelling a~real-time system, designers can usually provide practical bounds on time parameters of individual system actions.
Therefore, introducing a~parameter synthesis method with such a~restriction is still reasonable. 
In~\cite{jovanovic2015integer} the authors show that the problem of existence of bounded integer parameter value such that a given property is satisfied is PSPACE-complete for a significant number of properties, which include Timed Computational Tree Logic. They give symbolic algorithms only for reachability and unavoidability properties.

\subsubsection{Contribution.}
The main contribution of this paper is a symbolic method that solves the
parameter synthesis problem for specifications given in the Linear Time Logic
(LTL) and parametric timed automata with bounded integer parameters. 
To this end, we introduce a~finite abstraction of parametric timed automata with bounded integer parameters and provide an~algorithm working over this abstraction. 
To evaluate our technique we implemented both a symbolic approach and explicit enumeration technique in a~proof-of-concept tool and compare the techniques on a case study. 
The finite abstraction does not provide a~unique representation of states and therefore we design an efficient state storage mechanism that deals with this problem.
The experiments demonstrate the strength of the symbolic approach which may be faster by an~order of magnitude.

\subsubsection{Outline.}
 The rest of the paper is organised as follows. The problem definition is given
 in Section~\ref{section:preliminaries} that also introduces the basic notions.
 We then define the symbolic semantics of a parametric timed B\"{u}chi automaton and its finite abstraction in Section~\ref{section:symbolic}. 
 Section~\ref{section:algorithm} describes the parameter synthesis algorithm itself. 
Section~\ref{section:implementation} describes the implementation and used heuristics.
 Then, in Section~\ref{section:evaluation} we experimentally evaluate the proposed algorithm and compare it with explicit enumeration. 
 Finally, Section~\ref{section:conclusion} concludes the paper.

\section{Preliminaries and Problem Statement}
\label{section:preliminaries}

In order to state our main problem formally, we need to describe the notion of a~parametric timed automaton. We start by describing some basic notation.

Let $P$ be a~finite set of \emph{parameters}.
An \emph{affine expression} is an expression of the form $z_0 + z_1p_1 + \ldots
+ z_np_n$, where $p_1,\ldots,p_n \in P$ and $z_0,\ldots, z_n \in \mathbb{Z}$.
We use $E(P)$ to denote the set of all affine expressions over $P$.
A~\emph{parameter valuation} is a~function $v: P \rightarrow \mathbb{Z}$ which
assigns an~integer number to each parameter.
Let $lb: P \rightarrow \mathbb{Z}$ be a~lower bound function and $ub: P \rightarrow \mathbb{Z}$ be an~upper bound function.
For an affine expression $e$, we use $e[v]$ to denote the integer value obtained by replacing each $p$ in $e$ by $v(p)$.  
We use $max_{lb,ub}(e)$ to denote the~maximal value obtained by replacing each $p$ with a~positive coefficient in $e$ by $ub(p)$ and replacing each $p$ with a~negative coefficient in $e$ by $lb(p)$. We say that the~parameter valuation $v$ respects $lb$ and $ub$ if for each $p \in P$ it holds that $lb(p) \leq v(p) \leq ub(p)$. We denote the set of all parameter valuations respecting $lb$ and $ub$ by $Val_{lb,ub}(P)$. In the following, we only consider parameter valuations from $Val_{lb,ub}(P)$.

Let $X$ be a~finite set of \emph{clocks}. 
We assume the existence of a~special \emph{zero clock}, denoted by $x_0$, 
that has always the value 0. 
A~\emph{guard} is a~finite conjunction of expressions of the form 
$x_i - x_j \sim e$ where $x_i, x_j \in X$, $e \in E(P)$ and 
$\sim\ \in \{\le, <\}$. 
We use $G(X,P)$ to denote the set of all guards over a~set of clocks $X$ 
and a~set of parameters $P$. 
A~\emph{simple guard} is a~guard containing only expressions of the form $x_i - x_j \sim e$ where $x_i, x_j \in X$, $e \in E(P)$,  
$\sim\ \in \{\le, <\}$, and $x_i=x_0$ or $x_j=x_0$. We also use $\overline{G}(X,P)$ to denote the set of all simple guards over a~set of clocks $X$ 
and a~set of parameters $P$. 
A~\emph{clock valuation} is a~function $\eta: X \rightarrow \mathbb{R}_{\geq0}$
assigning non-negative real numbers to each clock such that $\eta(x_0)=0$. 
We~denote the set of all clock valuations by $Val(X)$.
Let $g \in G(X,P)$ and $v$ be a~parameter valuation and $\eta$ be a~clock valuation.
Then $g[v,\eta]$ denotes a~boolean value obtained from~$g$ by replacing each 
parameter $p$ with $v(p)$ and each clock $x$ with $\eta(x)$. 
A~pair $(v,\eta)$ \emph{satisfies} a~guard $g$, denoted by $(v,\eta)\models g$,
if $g[v,\eta]$ evaluates to true. 
The~\emph{semantics} of a~guard $g$, denoted by $\sem{g}$, 
is a~set of all valuation pairs $(v,\eta)$ such that $(v,\eta) \models g$.
For a~given parameter valuation $v$ we write $\sem{g}_v$ 
for the set of clock valuations $\{\eta\ |\ (v,\eta)\models g \}$.

We define two operations on clock valuations.
Let $\eta$ be a~clock valuation, $d$~a~non-negative real number and 
$R \subseteq X$ a~set of clocks.
We use $\eta + d$ to denote the clock valuation that adds the delay $d$ 
to each clock, i.e.~$(\eta + d)(x) = \eta(x) + d$ for all $x \in X \setminus \{x_0\}$.
We further use $\eta\langle R\rangle$ to denote the clock valuation that resets clocks 
from the set $R$, i.e.~$\eta\langle R\rangle(x) = 0$ if $x \in R$, $\eta\langle R\rangle(x) = \eta(x)$ otherwise.


\begin{definition}[PTA]
A~\emph{parametric timed automaton} (PTA) is a~tuple 
$M = (L, l_0, X, P, \Delta, \Inv)$ where
\begin{itemize}
  \item $L$ is a~finite set of locations,
  \item $l_0 \in L$ is the initial location, 
  \item $X$ is a~finite set of clocks,
  \item $P$ is a~finite set of parameters,
  \item $\Delta \subseteq L \times \overline{G}(X,P) \times 2^X \times L$ is 
	a~finite transition relation, and
  \item $\Inv: L \rightarrow \overline{G}(X,P)$ is an invariant function.
\end{itemize}
\end{definition}
We use $q \xrightarrow{g,R}_\Delta q'$ to denote $(q,g,R,q') \in \Delta$. The semantics of a~PTA is given as a labelled transition system. 
A~\emph{labelled transition system} (LTS) over a~set of symbols $\Sigma$
is a~triple $(S, s_0, \oarr)$, where $S$ is a~set of states, $s_0 \in S$ is an initial state and $\oarr \subseteq S \times \Sigma \times S$ is a~transition relation. 
We use $s\xrightarrow{a} s'$ to denote $(s,a,s') \in \oarr$. 

\begin{definition}[PTA semantics]
Let $M = (L, l_0, X, P, \Delta, \Inv)$ be a~PTA and $v$ be a~parameter valuation. 
The~\emph{semantics} of $M$ under $v$, denoted by 
$\sem{M}_v$, is an~LTS $(\mathbb{S}_M,s_0,\oarr)$ over the set of symbols
$\{ \act \} \cup \mathbb R_{\ge 0}$,
 where
\begin{itemize}
\item $\mathbb{S}_M=L\times Val(X)$ is a~set of all states,
\item $s_0 = (l_0,\mathbf{0})$, where $\mathbf{0}$ is a~clock valuation
with $\mathbf{0}(x) = 0$ for all $x$, and
\item the transition relation $\oarr$ is specified
for all $(l,\eta), (l',\eta') \in \mathbb S_M$ 
as follows:
 \begin{itemize}
 \item $(l,\eta) \delay{d} (l',\eta')$ if $l=l'$, 
	$d \in \mathbb R_{\ge 0}$, $\eta' = \eta + d$, and $(v,\eta') \models \Inv(l')$,
 \item $(l,\eta) \action (l',\eta')$ if 
	$\exists g, R : l \xrightarrow{g,R}_\Delta l'$, 
	$(v,\eta) \models g$, $\eta' = \eta\langle R\rangle$, \\ and $(v,\eta') \models \Inv(l')$.
 \end{itemize}
 The transitions of the first kind are called \emph{delay transitions},
 the latter are called \emph{action transitions}.
\end{itemize}

We write $s_1 \cactiond{act} s_2$ if there exists $s' \in \mathbb{S}_M$ and $d \in \mathbb{R}^{\geq 0}$ such that $s_1 \stackrel{act}{\longrightarrow} s' \stackrel{d}{\longrightarrow} s_2$.
A~proper run $\pi$ of $\sem{M}_v$ is an infinite alternating sequence of delay and action transitions that begins with a~delay transition $\pi = (l_0, \eta_0) \delay{d_0} (l_0,\eta_0+d_0) \action (l_1,\eta_1) \delay{d_1}
\cdots$. A~proper run is called Zeno if the sum of all its delays is finite.
\end{definition}


Let $M$ be a~PTA, $\Label : L \to 2^\Ap$ be a~labelling function that assigns a~set of atomic propositions to each location of $M$, $v$ be a~parameter valuation, and $\varphi$ be an~LTL formula. 
We say that $M$ under $v$ with~$\Label$ satisfies $\varphi$, denoted by $(M, v, \Label) \models \varphi$ if for all proper runs $\pi$ of $\sem{M}_v$, $\pi$ satisfies $\varphi$ where atomic propositions are determined by $\Label$.

Given a~parametric timed automaton $M$, a~labelling function $\mathcal L$, and an~LTL property $\varphi$, \textit{the parameter synthesis problem} is to compute the set of all parameter valuations $v$ such that $(M,v,\Label) \models \varphi$. 
Unfortunately, it is known that the parameter synthesis problem for a~PTA is undecidable even for very simple (reachability) properties~\cite{alur1993parametric}. 
Instead of solving the general problem, we thus focus on a~more constrained version which is still reasonable for practical purposes.

\subsubsection{Problem Formulation.}
Given a~parametric timed automaton $M$, a~labelling function $\mathcal L$,
an~LTL property $\varphi$, a~lower bound function $lb$ and an~upper bound function $ub$, 
\textit{the bounded integer parameter synthesis problem} is to compute the set of all parameter valuations
$v$ such that $(M,v,\Label) \models \varphi$ and $lb(p) \le v(p) \le ub(p)$.

This problem is trivially decidable using the standard zone-based abstraction
and explicit enumeration of all parameter valuations. In order to avoid the
necessity of the explicit enumeration of all parameter valuations we use
a~combination of the zone-based abstraction and a~symbolic representation of
parameter valuation sets. Our algorithmic framework which solves this problem
consists of three steps.

As the first step, we apply the~standard automata-based LTL model checking of timed automata~\cite{alur94theory} to parametric timed automata. We employ this approach in the following way. 
From a~PTA $M$ and an~LTL formula $\varphi$ we produce a~product parametric timed B\"{u}chi automaton (PTBA) $A$. The accepting runs of the automaton $A$ correspond to the runs of $M$ violating the formula $\varphi$.

As the second step, we employ a~symbolic semantics of a~PTBA $A$ with a~suitable extrapolation. From the symbolic state space of a~PTBA $A$ we finally produce a~B\"{u}chi automaton $B$ in which each state is associated symbolic information about parameter valuations. This transformation is described in Section~\ref{section:symbolic}.

As the last step, we need to detect all parameter valuations for which there
exists an~accepting run in B\"{u}chi automaton $B$.
To that end, we employ a~new algorithm, which we call the Cumulative NDFS. The
algorithm is described in detail in Section~\ref{section:algorithm}.

We now proceed with the definitions of a~B\"{u}chi automaton, a~parametric timed B\"{u}chi automaton and its semantics. 

\begin{definition}[BA]
A~\emph{B\"{u}chi automaton} (BA) is a~tuple $B=(Q,q_0,\Sigma,\mathord{\rightarrow}, F)$, where $Q$ is a~finite set of states, $q_0 \in Q$ is an~initial state, $\Sigma$ is a~finite set of symbols, $\rightarrow \subseteq Q \times \Sigma \times Q$ is a~set of transitions, and $F \subseteq Q$ is a~set of accepting states (acceptance condition).
An~$\omega$-word $w=a_0a_1a_2\ldots \in \Sigma^{\omega}$ is accepting if there is an infinite sequence of states $q_0q_1q_2\ldots$ such that $q_i\stackrel{a_i}{\longrightarrow}q_{i+1}$ for all $i \in \mathbb{N}$, and there exist infinitely many $i\in\mathbb{N}$ such that $q_i \in F$.
\end{definition}
 
\begin{definition}[PTBA]
A~\emph{parametric timed B\"{u}chi automaton} (PTBA) is a~pair $A = (M, F)$ where
 $M=(L, l_0, X, P, \Delta, \Inv)$ is a~PTA, and
 $F \subseteq L$ is a~set of accepting locations.
\end{definition}

Zeno runs represent non-realistic behaviours and it is desirable to ignore them in
analysis. Therefore, we are interested only in non-Zeno accepting runs of
a~PTBA. There is a~syntactic transformation to the so-called strongly non-Zeno
form~\cite{tripakis2005checking} of a~PTBA, which guarantees that each
accepting run is non-Zeno. For the rest of the paper,
we thus assume that there are no Zeno accepting runs in the PTBA.

\begin{definition}[PTBA semantics]
Let $A = (M, F)$ be a~PTBA and $v$ be a~parameter valuation. 
The~\emph{semantics} of $A$ under $v$, denoted by 
$\sem{A}_v$, is defined as $\sem{M}_v=(\mathbb{S}_M,s_0,\oarr)$.

We say a~state $s=(l,\eta) \in \mathbb{S}_M$ is \emph{accepting} if $l \in F$.
A~proper run $\pi=s_0 \stackrel{d_0}{\longrightarrow} s_0' \stackrel{act}{\longrightarrow} s_1 \stackrel{d_1}{\longrightarrow} s_1' \stackrel{act}{\longrightarrow} \ldots$ of $\sem{A}_v$ is accepting if there exists an~infinite set of indices $i$ such that $s_i$ is accepting. 
\end{definition}

\section{Symbolic Semantics}
\label{section:symbolic}
In this section we show the construction of a~finite system which represents
the semantics of a~given PTBA. First, we describe a parametric extension of the
zone abstraction. This extension is based on 
constrained parametric difference bound matrices, described in~\cite{hune02linear}.
However, this abstraction itself does not guarantee finiteness in our setting.
To solve this problem we further introduce a finite parametric extrapolation.

\subsection{Constrained Parametric Difference Bound Matrix}
A~\textit{constraint} is an inequality of the form $e \sim e'$ where $e, e'  \in E$  and  $\sim\ \in \{ \mathord{>}, \mathord{\ge},$ $\mathord{\le}, \mathord{<} \}$. 
We define $c[v]$ as the boolean value obtained by replacing each $p$ in $c$ by $v(p)$. 
A~valuation $v$ \textit{satisfies} a~constraint $c$, denoted $v \models c$, if $c[v]$ evaluates to true. 
The \textit{semantics} of a~constraint $c$, denoted $\sem{c}$, is the set of all valuations that satisfy $c$.
A~finite set of constraints $C$ is called a~\textit{constraint set}. 
A~valuation \textit{satisfies} a~constraint set $C$ if it satisfies each $c \in C$.
The \textit{semantics} of a~constraint set $C$ is given by $\sem{C}  = \bigcap_{c \in C} \sem{c}$.
A~constraint set $C$ is \textit{satisfiable} if $\sem{C} \neq \emptyset$.
A~constraint $c\ \mathit{ covers}$ a~constraint set $C$, denoted $C \models c$, if $\sem{C} \subseteq \sem{c}$. 
 
As in~\cite{hune02linear}, we identify the~relation symbol $\leq$ with the
boolean value true and $<$ with the boolean value false. Then, we treat boolean
connectives on relation symbols $\leq$, $<$ as operations with boolean values. For
example, $(\leq \implies <) = \mathord{<}$.

We now define the parametric difference bound matrix, the constrained
parametric difference bound matrix, several operations on them, and the
symbolic semantics of a~PTBA.

\begin{definition} 
A \emph{parametric difference bound matrix} (PDBM) over $P$ and $X$ is a~set $D$ which contains for all $0 \leq i, j \leq |X|$ a~guard of the form $x_i - x_j \prec_{ij} e_{ij}$ where $x_i,x_j \in X$ and $ e_{ij} \in E(P)\cup \{\infty\}$ and $ i=j \implies e_{ii}=0$.
We denote by $D_{ij}$ a~guard of the form $x_i - x_j \prec_{ij} e_{ij}$ contained in $D$ . 
Given a~parameter valuation $v$, the~\emph{semantics} of $D$ is given by $\sem{D}_{v} =\sem{ \bigwedge_{i,j} D_{ij} }_{v}$. 
A PDBM $D$ is \emph{satisfiable} with respect to $v$ if $\sem{D}_v$ is non-empty.
\end{definition}

\begin{definition}
A \emph{constrained parametric difference bound matrix} (CPDBM) is a~pair $(C,D)$, where $C$ is a~constraint set and $D$ is a~PDBM and for each $0\leq i\leq |X|$ it holds that $C\models e_{0i}\geq 0$. 
The \emph{semantics} of $(C,D)$ is given by $\sem{C,D} = \{ (v,\eta)\ |\ v \in \sem{C} \wedge \eta \in \sem{D}_{v} \}$.
We call $(C,D)$ \emph{satisfiable} if $\sem{C,D} $ is non-empty. 
A CPDBM $(C,D)$ is said to be \emph{in the canonical form} if and only if for all $i,j,k$, $C \models  e_{ij} (\prec_{ik} \wedge \prec_{kj}) e_{ik}+e_{kj}$.
\end{definition}

\subsubsection{Resetting a~Clock.}
Suppose $(C,D)$ is a~CPDBM in the canonical form. 
The reset of the clock $x_r$ in $(C,D)$, denoted by $(C,D)\langle x_r \rangle$,
is given as $(C,D\langle x_r\rangle)$ where: 
\[
 D\langle x_r \rangle_{ij}=
  \begin{cases}
   D_{0j}      	& \text{if } i \not= j \text{ and } i = r\text{,}	 \\
   D_{i0}       	& \text{if } i \not= j \text{ and } j = r\text{,}      \\
   D_{ij}  	& \text{else.}
  \end{cases}
\]
We can again generalise this definition to a~set of clocks:\\
$ (C,D)\langle x_{i_0},x_{i_1},\ldots, x_{i_k} \rangle \stackrel{def}{\Leftrightarrow} (C,D)\langle x_{i_0}\rangle\langle x_{i_1}\rangle \ldots\langle x_{i_k}\rangle$.

\subsubsection{Applying a~Guard.}
Suppose g is a~guard of the form $x_i -x_j \prec e$, $(C,D)$ is a~CPDBM in
the canonical form and $D_{ij} = (e_{ij}, \prec_{ij})$.
The application of the guard $g$ on $(C,D)$ generally results in a~set of
CPDBMs and is defined as follows:
\[
 (C, D)[g ]=
  \begin{cases}
   \{ (C, D[g]) \}      				
	& \text{if }  C \models \neg (e_{ij} ( \prec_{ij} \implies \prec ) e)\text{,}\\  
   \{ (C, D) \}    				
	& \text{if }  C \models  e_{ij} ( \prec_{ij} \implies \prec ) e\text{,}\\ 
   \{(C\cup\{ e_{ij} ( \prec_{ij} \implies \prec ) e \}, D),  & \text{otherwise,} \\ 
       (C\cup\{ \neg e_{ij} ( \prec_{ij} \implies \prec ) e \}, D[g]),\}	
		&			\\
  \end{cases}
\]
where $D[g]$ is defined as follows:
\[
 D[g]_{kl}=
  \begin{cases}
   ( e, \prec)      	& \text{if } k = i \text { and } l = j\text{,}  \\
   D_{kl}  	& \text{else.}
  \end{cases}
\]
We can generalise this definition to conjunctions of guards as follows:\\ 
$ D[g_{i_0} \wedge g_{i_1} \wedge \ldots \wedge g_{i_k}] \stackrel{def}{\Leftrightarrow} D[g_{i_0}][g_{i_1}]\ldots[g_{i_k}]$.

\subsubsection{Time Successors.}
Suppose $(C,D)$ is a~CPDBM in the canonical form.
The time successor of $(C,D)$, denoted by $(C,D)^\uparrow$,
 represents a~CPDBM with all upper bounds on clocks removed and is 
 given as $(C,D^\uparrow)$ where:
\[
 D^\uparrow_{ij} =
  \begin{cases}
   (\infty, <) & \text{if } i \not= 0 \text{ and } j = 0\text{,}	 \\
   D_{ij}        & \text{else.} 
  \end{cases}
\]

The~reset and time successor operations preserve the canonical form of a~CPDBM.
After the application of a~guard the CPDBM may no longer be in the canonical form
and thus a transformation to the~canonical form needs to be performed.
However, due to the presence of parameters the standard canonisation~\cite{dill1990timing} process can be ambiguous.
The canonisation procedure is therefore extended 
to cope with this ambiguity.
As a consequence, the result of the canonisation is not a~single CPDBM, but may
generally be a~set containing potentially more CPDBMs in the canonical form
with mutually disjoint constraint sets.

To canonise the~given CPDBM we need to derive the tightest constraint on each
clock difference. Deriving the tightest constraint on a~clock difference can be
seen as finding the shortest path in the~graph interpretation of the
CPDBM. In~\cite{hune02linear} the authors implement the canonisation using
a~nondeterministic extension of the Floyd-Warshall algorithm where on each relaxation
a~split into two different CPDBMs can occur.

\subsubsection{Canonisation.}
First, we define a~relation $\longrightarrow_{FW}$ on constrained parametric
bound matrices as follows, for all $0\leq k,i,j \leq |X|$:

   \begin{itemize}
   \item 
   $(k,i,j,C_1,D_1) \longrightarrow_{FW} (k,i,j+1,C_2,D_2)$ \\
   if $(C_2,D_2) \in (C_1,D_1)[x_i-x_j (\prec_{ik} \wedge \prec_{kj} ) e_{ik}+e_{kj} ]$
   \item
   $(k,i,|X|+1,C_1,D_1) \longrightarrow_{FW} (k,i+1,0,C_1,D_1)$
   \item
   $(k,|X|+1,0,C_1,D_1) \longrightarrow_{FW} (k+1,0,0,C_1,D_1)$
   
   \end{itemize}

The~relation $\longrightarrow_{FW}$ can be seen as a~representation of the computation steps of the extended  Floyd-Warshall algorithm. 

Suppose now $(C,D)$ is a~CPDBM. The~canonical set of $(C,D)$, denoted as
$(C,D)_c$, represents a~set of CPDBMs with the tightest constraint on each
clock difference in $D$ and is defined as follows:
\[
 (C, D)_c =
   \{ (C', D') \mid (0,0,0,C,D) \longrightarrow_{FW}^* (|X|+1,0,0, C',D') \}      \]

\begin{example}\label{ex:canon}
Let $x,y \in X$ and $p,q \in P$. 
For a CPDBM $(C,D)=(\emptyset, \{x\leq p, y \leq q, y\leq x, y \leq x\})$ we obtain by canonisation $(C,D)_c = \{ (\{p \leq q\},\{ x\leq p, y \leq p, y\leq x, y \leq x\})$ , $(\{q < p\},\{ x\leq q, y \leq q, y\leq x, y \leq x\}) \}$.
\end{example}

\begin{definition}[PTBA symbolic semantics]
Let $A = ((L, l_0, X, P, \Delta, \Inv),$ $F)$ be a~PTBA. Let $lb$ and $ub$ be a~lower bound function and an~upper bound function on parameters.
The \emph{symbolic semantics} of $A$ with respect to $lb$ and $ub$ is a~transition system $(\mathbb S_A,\mathbb S_{init},\Longrightarrow)$, denoted as $\sem{A}_{lb,ub}$, where 
\begin{itemize}
\item $\mathbb S_A = L \times \{ \sem{C,D} \mid (C,D) $ is a $CPDBM\}$ is the set of all symbolic states,
\item the set of initial states $\mathbb S_{init} =
  \{(l_0, \sem{C, D})\ |\  (C,D) \in (\emptyset, E^\uparrow)[ \mathit{Inv}(l_0)] \}$, where
	\begin{itemize}
		\item $E$ is a~PDBM with $E_{i,j}=(0,\leq)$ for each $i,j$, and
		\item for each $p \in P$, the constraints $p \geq lb(p)$ and $p \leq ub(p)$ are in $C$.
 	\end{itemize}
\item There is a~transition $(l,\sem{C,D}) \Longrightarrow (l',\sem{C'_c,D'_c})$ if  
\begin{itemize}
\item $l \stackrel{g,R}{\longrightarrow_{\Delta}} l'$ and
\item $(C'', D'') \in (C,D)[g]$ and
\item $(C''_c, D''_c) \in (C'', D'')_c$ and
\item $ (C', D') \in  (C''_c,D''_c\langle R\rangle^\uparrow)[Inv(l')]$ and 
\item $(C'_c, D'_c) \in (C', D')_c$.
\end{itemize}
\end{itemize} 

We say that a~state $S=(l,\sem{C,D}) \in \mathbb S_A$ is accepting if $l \in F$. 
We say that $\pi = S_0 \Longrightarrow S_1 \Longrightarrow \ldots$ is a~run of
$\sem{A}_{lb,ub}$ if $S_0 \in \mathbb S_{init}$ and for each $i$, $S_i \in S_A$ and
$S_{i-1}\Longrightarrow S_i$. A~run respects a~parameter valuation $v$ if for
each state $S_i=(l_i,\sem{C_i,D_i})$ it holds that $v \in \sem{C_i}$. 
A run $\pi$ is accepting if there exists an infinite set of indices $i$ such that $S_i$ is accepting. For the rest of the paper we fix $lb$, $ub$ and use $\sem{A}$ to denote $\sem{A}_{lb,ub}$.
\end{definition}

\subsection{Finite Abstraction}
Similarly to the nonparametric case, the symbolic
transition system $\sem{A}$ may be infinite. In order to obtain a~finite
transition system we need to apply a~finite abstraction over $\sem{A}$. In the
standard case of timed automata without parameters we use one of the
extrapolation techniques~\cite{bouyer2004forward,behrmann2006lower}. In our parametric setup we define a~new finite
abstraction called the \emph{pk-extrapolation} which is a~parametric extension
of the widely used \emph{k-extrapolation}~\cite{bouyer2004forward}.
The k-extrapolation identifies states which are identical except for the clock values which exceeds the maximal constant from guards and invariants.

In our parametric setup, we need to define the maximal constant with which each clock within a~PTBA is compared. 
We define $M(x)$ as the maximal value in $\{max_{lb,ub}(e) \mid e$~is compared
with $x$ in a~guard or an~invariant of the considered PTBA$\}$.
The core idea of pk-extrapolation is the same as the idea of k-extrapolation. We substitute each bound on clock difference in the CPDBM whenever this bound exceeds the maximal constant. The precise description of this substitution process is given in the Definition \ref{definition:pkextrapolation}.
Contrary to the nonparametric case, due to the occurrence of parameters in the CPDBM bounds, the substitution
process may be ambiguous.
In these situations we restrict the parameter values in order to obtain an unambiguous situation.
This solution is similar to the constraint set splitting that is done in the
application of a~guard and in the canonisation procedure.
Therefore, the result of pk-extrapolation is a~set of CPDBMs instead of a~single CPDBM.
 
\begin{definition}
\label{definition:pkextrapolation}
Let $A$ be a~PTBA, $(l,\sem{C,D})$ be a~symbolic state of $\sem{A}$ and $D_{ij} = x_i - x_j \prec_{ij} e_{ij}$ for each $0\leq i,j\leq |X|$. 
We define the \emph{pk-extrapolation} $\alpha_{pk}$ in the following way.
$\alpha_{pk}(l,\sem{C,D})$ is the set of all $(l,\sem{C',D'})$ such that for each
$i$, $j$, $0 \leq i,j \leq |X|$ one of the following conditions holds:
\begin{itemize}
\item
$D'_{ij} = x_i - x_j  \prec_{ij} e_{ij}$ and the constraint $(e_{ij} \leq M(x_i)) \in C'$,
\item
$D'_{ij} = x_i - x_j  < \infty$ and the constraint $(e_{ij} > M(x_i))\in C'$,
\item
$D'_{ij} = x_i - x_j  \prec_{ij} e_{ij}$ and the constraint $(e_{ij} \geq -M(x_j))\in C'$,
\item
$D'_{ij} = x_i - x_j  < -M(x_j)$ and the constraint $(e_{ij} < -M(x_j))\in C'$.
\end{itemize}
\end{definition}

\begin{example}\label{exa:extrapolation}
Consider $x,y \in X$, $p \in P$, $p \in [0,7]$, $M(x)=M(y)=10$, and the symbolic
state $(l,\sem{C,D})$ where $C=\emptyset$ and $D=\{x\leq y, y\leq x, y \leq
2p\}$. Now, $\alpha_{pk}(l,\sem{C,D})$ contains two symbolic states:
$(l,\sem{C_1,D_1})$ and $(l,\sem{C_2,D_2})$ where $C_1=\{2p \leq 10\}$,
$D_1=\{x\leq y, y\leq x, y \leq 2p\}$, $C_2=\{2p>10\}$, $D_2=\{x\leq y, y\leq
x, y < \infty\}$.
\end{example}

\begin{theorem}
\label{theorem:extraLassoSimulation}
Let $A$ be a PTBA. The pk-extrapolation is a~finite abstraction that preserves all accepting runs of $\sem{A}_v$ for each parameter valuation $v$.
\end{theorem}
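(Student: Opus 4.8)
The plan is to establish the two claims of the theorem separately: first that $\alpha_{pk}$ yields a finite transition system when layered on top of $\sem{A}$, and second that it preserves all accepting runs of the concrete semantics $\sem{M}_v$ for every fixed $v \in Val_{lb,ub}(P)$.

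For finiteness, I would argue as follows. Fix a parameter valuation $v$; since $v$ respects $lb$ and $ub$, the set of relevant parameter valuations is finite. The key observation is that after applying $\alpha_{pk}$ to a symbolic state $(l,\sem{C,D})$, every bound $e_{ij}$ appearing in $D'$ is either one of the finitely many ``small'' affine expressions $e$ that actually occur syntactically in a guard or invariant of $A$ (these get copied verbatim when $C' \models e \le M(x_i)$ or $C' \models e \ge -M(x_j)$), or it is one of the two fixed sentinel values $\infty$ and $-M(x_j)$. Hence the number of distinct PDBMs modulo pk-extrapolation is bounded by a function of $|X|$ and the size of the automaton. The constraint sets $C'$ are themselves built by intersecting the initial box $\prod_p [lb(p), ub(p)]$ with finitely many atomic constraints drawn from a finite pool (the split predicates introduced by guard application, canonisation, and the four clauses of Definition~\ref{definition:pkextrapolation}), so up to semantic equivalence only finitely many $\sem{C'}$ arise. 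Therefore $\mathbb S_A / {\equiv}$ restricted to the image of $\alpha_{pk}$ is finite; combined with the fact that $L$ is finite, the abstracted transition system has finitely many reachable states. I would need to be slightly careful that the split constraints do not proliferate unboundedly along a run — the standard argument is that each split predicate compares a fixed affine expression against a fixed integer ($M(x_i)$, $-M(x_j)$, or $0$), and there are only finitely many such predicates, so once all of them have been decided no further splitting is possible.

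For soundness (preservation of accepting runs), I would fix $v$ and proceed in two sub-steps. First, by the correctness of the CPDBM operations and canonisation established in the cited work~\cite{hune02linear}, the symbolic semantics $\sem{A}$ is sound and complete for $\sem{M}_v$: a concrete run of $\sem{M}_v$ respecting the delay/action/accepting structure corresponds to a symbolic run of $\sem{A}$ whose every state $(l_i,\sem{C_i,D_i})$ satisfies $v \in \sem{C_i}$ and $\eta_i \in \sem{D_i}_v$, and conversely. Second, I would show that $\alpha_{pk}$ is a \emph{simulation-preserving} abstraction in the classical sense of $k$-extrapolation~\cite{bouyer2004forward}: for the fixed $v$, whenever a clock value $\eta(x_i) - \eta(x_j)$ exceeds $M(x_i)$, the precise value is irrelevant to satisfaction of every guard and invariant of $A$ (because those only compare $x_i$ against expressions bounded by $M(x_i)$), so replacing the bound $e_{ij}$ by $\infty$ (resp. by $-M(x_j)$) under the constraint $e_{ij}[v] > M(x_i)$ (resp. $< -M(x_j)$) does not remove any concrete successor. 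One then lifts this pointwise simulation to infinite runs: given an accepting run of $\sem{M}_v$, its symbolic image, followed stepwise and pk-extrapolated, yields a run of the abstracted system visiting accepting locations infinitely often — note that the case split in $\alpha_{pk}$ always has at least one branch whose constraint $v$ satisfies, so the run can always be continued consistently with $v$. Conversely an accepting run of the abstracted system respecting $v$ induces, via completeness of the CPDBM machinery and the fact that pk-extrapolation only relaxes bounds, an accepting run of $\sem{M}_v$ — here one uses the strongly non-Zeno assumption to rule out Zeno counterexamples and the region/zone equivalence to realise the symbolic run concretely.

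The main obstacle I anticipate is the simulation argument in the presence of the nondeterministic splits: unlike the nonparametric $k$-extrapolation, $\alpha_{pk}$ is set-valued, so ``preserving accepting runs'' must be phrased carefully — for a \emph{fixed} $v$ the relevant branch is determined (the one whose constraint holds at $v$), and one must check that following that branch commutes with the transition relation $\Longrightarrow$, i.e. that extrapolating and then taking a symbolic step reaches (a state whose semantics at $v$ contains) the same concrete states as taking the step and then extrapolating. Making this diagram commute requires revisiting the interaction of $\alpha_{pk}$ with the reset, time-successor, guard-application, and canonisation operations, and checking that the extra constraints $\alpha_{pk}$ adds to $C'$ are consistent with $v$ along the whole run. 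A secondary subtlety is ensuring the abstracted run's accepting states genuinely correspond to accepting \emph{non-Zeno} runs of $\sem{M}_v$, which is where the strongly non-Zeno form assumption is used.
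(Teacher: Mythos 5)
Your proposal has the right overall shape (finiteness plus run preservation, a simulation argument in the style of $k$-extrapolation, and an appeal to regions), but two of its load-bearing steps are not sound as stated. First, the finiteness argument rests on the claim that after applying $\alpha_{pk}$ every bound $e_{ij}$ of the PDBM is either an affine expression occurring \emph{syntactically} in a guard or invariant of $A$, or a sentinel value. That is false: canonisation replaces bounds by sums $e_{ik}+e_{kj}$, and reset and guard application propagate these, so the expressions stored in a PDBM (and the split constraints produced by canonisation, which compare such sums) form an unbounded syntactic family whose coefficients grow along a run; there is no fixed finite pool of atoms. The argument that works, and the one the paper uses (Lemma~\ref{lemma:extraFiniteness}), is semantic: symbolic states are of the form $(l,\sem{C,D})$, parameters range over the finite set $Val_{lb,ub}(P)$, and after pk-extrapolation the value $e_{ij}[v]$ for every admissible $v$ lies in a finite set of integers bounded by the maximal constants, together with $\infty$; hence only finitely many semantic objects $\sem{C,D}$ arise, and finiteness follows with finiteness of $L$. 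Your ``up to semantic equivalence'' remark for the constraint sets is the right move, but it must also carry the PDBM part, and the syntactic claim should be dropped.

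Second, and more importantly, the hard direction of preservation is asserted rather than proved. The direction you argue in detail (clock values beyond $M(x_i)$ are irrelevant to guards, so extrapolation ``does not remove any concrete successor'') only shows that the abstraction does not lose behaviours; that is the easy direction. The converse --- an accepting run of $\sem{A}^{\alpha}$ respecting $v$ yields an accepting run of $\sem{A}_v$ --- you dispatch via ``completeness of the CPDBM machinery and the fact that pk-extrapolation only relaxes bounds \ldots realise the symbolic run concretely.'' Precisely because extrapolation only relaxes (enlarges) zones, the abstract system has strictly more behaviours, and the exactness results for the unextrapolated symbolic semantics (Lemmas~3.16 and~3.18 of \cite{hune02linear}, the paper's Lemma~\ref{lemma:simulation}) say nothing about the extrapolated one: a concrete state lying in an extrapolated zone need not be reachable, so a symbolic accepting lasso cannot simply be ``realised.'' The paper closes this gap by (i) building into the definition of abstraction the requirement that every concrete state contained in an extrapolated state is time-abstract \emph{simulated by} some concrete state of the original symbolic state --- your guard-insensitivity observation, but used in the backward direction --- and (ii) a pumping argument: walk backward along the symbolic lasso using this simulation (Lemma~\ref{lemma:li7}), iterate the loop (Lemma~\ref{lemma:li8}), and use that the region equivalence of the non-parametric automaton $A\mid v$ has finitely many classes and is a time-abstracting bisimulation to find a repeated region and unfold it into a genuine infinite concrete accepting run (Lemmas~\ref{lemma:li9}--\ref{lemma:li12}, Theorem~\ref{theorem:lassoSimulation}). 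Without this simulation-plus-region-pumping construction your converse direction does not go through; the appeal to the strongly non-Zeno assumption is tangential here, since Zeno accepting runs are excluded globally before the theorem is invoked.
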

The proof of this theorem is given in Appendix~\ref{app:extraLassoSimulation}.

\section{Parameter Synthesis Algorithm}
\label{section:algorithm}
We recall that our main objective is to find all parameter valuations for which the parametric timed automaton satisfies its specification. 
In the previous sections we have described the standard automata-based method employed under a~parametric setup which produces a~B\"{u}chi automaton.
For the rest of this section  we use $s.\sem{C}$ to denote the set $\sem{C}$ where $s=(l,\sem{C,D})$ is a~state of the input B\"{u}chi automaton.
We say that a~sequence of states $s_1 \Longrightarrow s_2 \Longrightarrow \ldots \Longrightarrow s_n \Longrightarrow s_1$ is a~cycle under the~parameter valuation $v$ if each state $s_i$ in the sequence satisfies $v \in s_i.\sem{C}$. A~cycle is called accepting if there exists $0\leq i\leq n$ such that $s_i$ is accepting.

The standard automata-based LTL model checking checks the emptiness of the produced B\"{u}chi automaton. The emptiness check can be performed using the Nested Depth First Search (NDFS) algorithm~\cite{courcoubetis1992memory}.
The NDFS algorithm is a~modification of the depth first search algorithm which allows a detection of an~accepting cycle in the given B\"{u}chi automaton.

Contrary to the standard LTL model checking, it is not enough to check the
emptiness of the produced B\"{u}chi automaton.
Our objective is to check the emptiness of the produced B\"{u}chi automaton for each considered parameter valuation.
To solve this objective, we introduce a~new algorithm called the~Cumulative NDFS algorithm which is an~extension of the NDFS algorithm. The pseudocode of Cumulative NDFS is given in Algorithm~1.
Our modification is based on the set $Found$ which accumulates all detected parametric valuations such that an~accepting cycle under these valuations was found. 
In contrast to the NDFS algorithm, whenever Cumulative NDFS detects an~accepting cycle, parameter valuations are saved to the set $Found$ and the computation continues with a~search for another accepting cycle.
Note the fact that whenever we reach a~state $s'$ with $s'.\sem{C} \subseteq Found$ we already have found an~accepting cycle under all valuations from $s'.\sem{C}$ and there is no need to continue with the search from $s'$.
Therefore, we are able to speed up the computation whenever we reach such a~state.

The crucial property the algorithm is based on is that of monotonicity.
The set of parameter valuations $s.\sem{C}$ can not grow along any run of the input automaton.
Lemma~\ref{lemma:mono} states this observation formally.
The observation follows from the definition of successors in $\sem{A}^{\alpha}$
and the definition of operations on CPDBMs. The clear corollary of
Lemma~\ref{lemma:mono} is the fact that each state $s$ on a~cycle has the same
set $s.\sem{C}$. 
\begin{lemma}
\label{lemma:mono}
Let $A$ be a~PTBA, $\alpha$ be an abstraction and $s$ be a~state in $\sem{A}^{\alpha}$. 
For every state $s'$ reachable from $s$ it holds that $s'.\sem{C} \subseteq s.\sem{C}$.
\end{lemma}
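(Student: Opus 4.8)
The plan is to prove the statement by induction on the length of the path from $s$ to $s'$ in $\sem{A}^{\alpha}$. The base case is trivial since $s'=s$ gives $s'.\sem{C} = s.\sem{C}$. For the inductive step it suffices to prove the one-step claim: if $s_1 \Longrightarrow_{\alpha} s_2$ in $\sem{A}^{\alpha}$, then $s_2.\sem{C} \subseteq s_1.\sem{C}$; composing this with the induction hypothesis along the path from $s$ to the predecessor of $s'$ then yields the result, using transitivity of set inclusion. So the whole lemma reduces to a purely local analysis of a single transition.

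For the one-step claim I would unfold the definition of the transition relation. A step $s_1 = (l,\sem{C,D}) \Longrightarrow_{\alpha} s_2 = (l',\sem{\hat C, \hat D})$ is, by definition of the symbolic semantics followed by the extrapolation $\alpha$, a composition of the five constituent operations --- applying a guard $g$, canonising, resetting $R$ and taking the time successor, applying $\mathit{Inv}(l')$, and canonising again --- and then finally applying $\alpha$ (here $\alpha = \alpha_{pk}$, but the argument works for any abstraction that only ever adds constraints). The key observation is that \emph{every one of these operations only ever leaves the constraint set $C$ untouched or adds new constraints to it}: the reset and time-successor operations act purely on $D$ and leave $C$ unchanged; applying a guard either keeps $C$, or replaces $C$ by $C \cup \{c\}$ for a single constraint $c$ (in the splitting case); each $\longrightarrow_{FW}$ step in canonisation likewise either keeps $C$ or adds one constraint; and each of the four clauses in Definition~\ref{definition:pkextrapolation} adds a constraint to $C'$. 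Hence if $C''$ is the constraint set of any intermediate CPDBM produced along the way, then $C \subseteq C''$ as sets of constraints, so $\sem{C''} = \bigcap_{c \in C''}\sem{c} \subseteq \bigcap_{c \in C}\sem{c} = \sem{C}$. In particular $\sem{\hat C} \subseteq \sem{C}$, i.e.\ $s_2.\sem{C} \subseteq s_1.\sem{C}$.

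The bulk of the work is therefore bookkeeping: making the statement ``constraints are only added, never removed or weakened'' precise for each of the operations defined in Section~\ref{section:symbolic}, and noting that each is stable under the generalisations to conjunctions of guards and to sets of reset clocks (which are just iterated applications). The only mildly delicate point --- and the place I would expect a careful reader to want detail --- is the splitting behaviour in the guard application and in the Floyd--Warshall canonisation: one must check that in \emph{both} branches of a split the old constraint set $C$ survives as a subset, so that whichever branch the run actually follows, monotonicity is preserved; this is immediate from the displayed definitions, where every branch has the form $(C \cup \{\cdot\}, \cdot)$. I would then remark on the stated corollary: if $s_1 \Longrightarrow s_2 \Longrightarrow \cdots \Longrightarrow s_n \Longrightarrow s_1$ is a cycle, applying the lemma around the loop gives $s_1.\sem{C} \subseteq s_n.\sem{C} \subseteq \cdots \subseteq s_1.\sem{C}$, forcing all inclusions to be equalities, so every state on the cycle carries the same parameter set.
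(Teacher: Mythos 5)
Your proposal is correct and follows essentially the same route as the paper, which merely remarks that the lemma follows from the definition of successors in $\sem{A}^{\alpha}$ and of the operations on CPDBMs; your path-length induction plus one-step analysis (guard application, canonisation, reset, time successor, invariant, extrapolation each at most add constraints to $C$, and both branches of every split retain $C$) is exactly the bookkeeping behind that remark. One small adjustment: since the lemma is stated for an arbitrary abstraction $\alpha$, the extrapolation step should be justified not by ``$\alpha$ only adds constraints'' but by the first clause of the definition of an abstraction, which directly requires $\sem{C'} \subseteq \sem{C}$ for every $(l',\sem{C',D'}) \in \alpha((l,\sem{C,D}))$; with that, your argument covers the general case.
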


\begin{algorithm}[t]
  \label{algorithm:cumulative}
  \caption{Cumulative NDFS}
  \SetAlgoLined\DontPrintSemicolon
  
  \SetInd{0.65em}{0.65em}
  \SetKwProg{alg}{Algorithm}{}{}
  \SetKwProg{proc}{Procedure}{}{}

   \begin{multicols}{2}
  \setcounter{AlgoLine}{0}
  \alg{$\mathit{CumulativeNDFS}(G)$}
  {
    \nl $Found \gets \emptyset$; $Stack \gets \emptyset$\;
        $Outer \gets \emptyset$; $Inner \gets \emptyset$\;
    \nl  $OuterDFS(s_{init})$ \;
    \nl \KwRet $ Accepted \gets Found$\;
   }{}

  \proc{$OuterDFS(s)$}
  {
    \nl $Stack \gets Stack \cup \{s\}$ \;
    \nl $Outer \gets Outer \cup \{s\}$ \;
    \nl \ForEach{$s'$ such that $s \rightarrow s'$ } 
    {
      \nl \If{ $s'  \notin Outer\ \wedge $ $s'  \notin Stack\ \wedge  $ $s'.\sem{C} \not\subseteq Found$}
      {
        \nl $OuterDFS(s')$ \;
      }
    } 
    \nl  \If{ $s  \in Accepting \wedge s.\sem{C} \not\subseteq Found$}
      {
        \nl $InnerDFS(s)$ \;
      }
    \nl $Stack \gets Stack \setminus \{s\}$ \;
  }
  
  \proc{$InnerDFS(s)$}
  {
    \nl $Inner \gets Inner \cup \{s\}$ \;
    \nl \ForEach{$s'$ such that $s \rightarrow s'$ } 
    {
      \nl \If{ $s'  \in Stack$}
      {
        \nl ``Cycle detected''\;
        \nl $Found \gets Found \cup s'.\sem{C}$ \;
        \nl \KwRet \;
        
      }
      \nl \If{ $s'  \notin Inner\ \wedge $ \\
      $s'.\sem{C} \not\subseteq Found $}
      {
        \nl $InnerDFS(s')$ \;
      }
    } 
  }
  \end{multicols}
\end{algorithm}

\begin{theorem}
\label{theorem:CumulativeNDFSCorrectness}
Let $A$ be a~PTBA and $\alpha$ an~abstraction over $\sem{A}$. 
A~parameter valuation $v$ is contained in the output of the \textit{CumulativeNDFS($\sem{A}^{\alpha}$)} if and only if there exists an~accepting run respecting $v$ in $\sem{A}^{\alpha}$.
\end{theorem}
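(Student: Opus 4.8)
The plan is to establish the two implications separately, treating the Cumulative NDFS as a structural variant of the standard NDFS whose only new ingredient is the accumulator $\mathit{Found}$ together with the pruning test $s'.\sem{C}\subseteq\mathit{Found}$. Throughout I would exploit Lemma~\ref{lemma:mono} (monotonicity) and its corollary that every state lying on a cycle carries one and the same constraint set $\sem{C}$; this is what makes ``the valuations of the cycle'' a well-defined object and lets us treat $\sem{C}$ as a single value attached to each detected accepting cycle.

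For the \emph{soundness} direction ($v\in\mathit{Found}\implies$ there is an accepting run respecting $v$), I would argue that a valuation is inserted into $\mathit{Found}$ (line~16) only when $\mathit{InnerDFS}$ reaches a state $s'$ already on $\mathit{Stack}$. At that moment the $\mathit{Stack}$ contents, from the accepting seed state $s$ down to $s'$, together with the final back-edge $\to s'$, form a genuine cycle in $\sem{A}^\alpha$ through an accepting state; and every state on $\mathit{Stack}$ was pushed by $\mathit{OuterDFS}$, hence is reachable from $s_{\mathit{init}}$. The inserted set is exactly $s'.\sem{C}$, and by the cycle corollary of Lemma~\ref{lemma:mono} every state on this cycle has the same constraint set $s'.\sem{C}$, so for each $v\in s'.\sem{C}$ the infinite run that first follows the stack path from $s_{\mathit{init}}$ to $s'$ and then loops around the cycle forever is an accepting run respecting $v$.

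For the \emph{completeness} direction (accepting run respecting $v$ $\implies v\in\mathit{Found}$ at termination) I would proceed by contradiction, assuming $v$ witnesses an accepting run $\rho$ but $v\notin\mathit{Found}$ when the algorithm stops. An accepting run in a finite transition system visits some accepting state $s^\star$ infinitely often, so there is a reachable accepting $s^\star$ lying on a cycle all of whose states have constraint set containing $v$. The key is that the pruning test never discards such a state: if $s'.\sem{C}\subseteq\mathit{Found}$ then in particular $v$ would already be in $\mathit{Found}$ (since $v\in s'.\sem{C}$), contradicting the assumption; so for the purposes of reaching $v$-relevant states the Cumulative search behaves exactly like an unpruned NDFS. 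I would then invoke the classical correctness argument for NDFS — the depth-first ordering on the ``red'' (inner) search guarantees that when an accepting state is the seed of $\mathit{InnerDFS}$, any cycle through it back to the stack is found — adapted to the fact that here $\mathit{InnerDFS}$ does not halt the whole computation on success but merely records and returns, and that $\mathit{OuterDFS}$ launches $\mathit{InnerDFS}$ from every accepting state it leaves whose $\sem{C}$ is not yet covered. Hence the seed $s^\star$ (or an equivalent accepting state on the same strongly connected component reached earlier in DFS order) triggers an inner search that detects the cycle and adds $s^\star.\sem{C}\ni v$ to $\mathit{Found}$ — contradiction.

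The main obstacle I anticipate is the completeness argument, specifically re-establishing the invariant that underlies standard NDFS correctness in the presence of the $\mathit{Found}$-pruning and of the fact that $\mathit{InnerDFS}$ is invoked repeatedly and returns rather than aborting. One must check that pruning a subtree rooted at $s'$ with $s'.\sem{C}\subseteq\mathit{Found}$ cannot hide the \emph{first} $v$-carrying accepting cycle from being discovered, which again reduces to: such pruning implies $v\in\mathit{Found}$ already, so nothing that still needs to be found is ever pruned. I would isolate this as an auxiliary invariant — ``at all times, for every reachable accepting state $t$ lying on a $\sem{C}_t$-cycle with $t.\sem{C}\not\subseteq\mathit{Found}$, the outer/inner searches will still visit $t$ and run $\mathit{InnerDFS}$ from it'' — and prove it by induction on the execution, the rest following from the classical NDFS analysis of~\cite{courcoubetis1992memory}.
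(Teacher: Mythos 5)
Your soundness direction matches the paper's and is essentially right: the cycle reconstructed from the inner path plus the outer stack passes through the accepting seed, Lemma~\ref{lemma:mono} makes all states on that cycle share the constraint set $s'.\sem{C}$ that is added to $\mathit{Found}$, and monotonicity also makes the stem from the initial state respect $v$. The gap is in the completeness direction. Your key reduction --- ``if $s'.\sem{C}\subseteq\mathit{Found}$ and $v\in s'.\sem{C}$ then $v\in\mathit{Found}$ already, so nothing $v$-relevant is ever pruned, and the classical NDFS analysis applies'' --- only neutralises the $\mathit{Found}$-based pruning. It does not address the real difficulty, which is blocking by the shared $\mathit{Inner}$ set: an inner search seeded at an accepting state $q$ on a cycle under $v$ can be cut off by states coloured during earlier inner searches, seeded at accepting states whose cycles concern entirely different valuations, and --- unlike in the classical algorithm, which aborts on the first detection --- those earlier searches may already have reported cycles and returned, so one cannot argue that ``no accepting cycle has been found so far.'' Your parenthetical claim that the depth-first ordering guarantees that \emph{any} cycle through the seed back to the stack is found is false for a single inner search in isolation; classical correctness is a global minimal-counterexample argument. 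Correspondingly, the auxiliary invariant you propose (``InnerDFS will still be visited and run from $t$'') is too weak: running InnerDFS from $t$ does not by itself imply that the cycle through $t$ is detected.

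What is missing is the adapted missed-cycle argument itself, and the lemma it rests on. Fix the \emph{first} accepting state $q$ on a cycle under $v$ from which InnerDFS is started, let $r$ be the first previously-inner-visited state on a cycle through $q$ encountered by that search, and let $q'$ be the accepting seed of the earlier inner search that visited $r$. If $q'$ is reachable from $q$, monotonicity (all states on a cycle share one $\sem{C}$) makes $q'\leadsto r\leadsto q\leadsto q'$ an accepting cycle under $v$, contradicting the minimality of $q$. If $q'$ is not reachable from $q$, one needs a pruning-aware postorder property of OuterDFS --- the paper's Lemma~\ref{lemma:extendedPostorder}, stating that OuterDFS backtracks from a state not on any cycle only after every reachable $v$-relevant state has been backtracked or covered by $\mathit{Found}$ --- to conclude that InnerDFS would have started from $q$ before $q'$, a contradiction. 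That lemma is the genuinely new ingredient your plan does not identify; without it (or an equivalent) the induction you defer to ``the classical NDFS analysis'' cannot be completed, because the $\mathit{Found}$-pruning in line~7 perturbs exactly the postorder property the classical argument relies on. Finally, you never argue termination, which the phrase ``is contained in the output'' implicitly requires (the paper's Lemma~\ref{lemma:termination}).
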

The proof of this theorem is given in Appendix~\ref{app:cumulative}.
 
\bigskip
As the last step in the solution to our problem, we need to complement the set $Accepted$. Thus, the solution is the complement of the set $Accepted$, more precisely the set $Val_{lb,ub}(X,P)\setminus Accepted$. To conclude this section, we state that Theorem \ref{theorem:CumulativeNDFSCorrectness} together with Theorem \ref{theorem:extraLassoSimulation} imply the correctness of our solution.

\section{Implementation}
\label{section:implementation}
We have implemented our approach in a~proof-of-concept tool.
We are able to process models given as networks of parametric timed automata.
A~network represents a~product of several parametric timed automata where
handshake synchronization of two components at a time is allowed. We also
extend the parametric timed automata with data variables which enable the usage
of guards on data values and transition effects on data values. Such model is
considered standard in the field and is used as the modelling language in the tool
UPPAAL. 

\textbf{Deadlocks} Cumulative NDFS algorithm returns all parameter valuations for which LTL property does not hold.
However, state space can contain deadlock states which also need to be detected and reported.
In the~nonparametric setting a~state is a~deadlock state if there are no enabled outgoing transitions.
In~a~parametric setting the deadlock status of a~state depends on the parameter valuation. 
To decide for which parameter valuations a~state $(l,\sem{C,D})$ is a deadlock we need to consider all guards $g_1,\ldots, g_n$ of the outgoing transitions of $l$. The state $(l,\sem{C,D})$ is a~deadlock for all parameter valuations in $\sem{C,D}[\neg g_1 \wedge \ldots \wedge \neg g_n]$. Applying this detection to each reachable state, all parameter valuations leading to deadlock are detected during computation.

\textbf{State space storage} One of the performance critical parts of the implementation is the state space
storage. We use the state space storage to look up and store information about
presence of each state in the sets $Inner$, $Outer$, and $Stack$. We refer to
this information as $data$.
A~straightforward implementation would simply store each state together with its data. 
Such a~solution is only efficient when a~unique representation of states is available. 
Without such a~unique representation the storage operations have to perform
expensive equivalence checks with each stored state in the worst case scenario. 
In~\cite{jovanovic2015integer} the authors introduce unique representation
based on a computation of an~integer hull. The integer hull of a~given set is a
convex hull of all integer elements of a~given set.

The solution of~\cite{jovanovic2015integer} assumes the existence of an~upper
bound for each clock. We do not have such an upper-bound assumption and
therefore this solution is not directly applicable in our technique.
However, we use the integer hull as a heuristic approximation of a~unique
representation of a~CPDBM instead. This way we obtain a practically efficient solution
that deals with the non-existence of a~unique representation of a~state.

The solution is based on two mappings. The first mapping, denoted by $M_1$
maps a given integer hull to a list of CPDBM representations.
Each such list contains the representations of semantically different CPDBMs with the same integer hull. 
Thanks to $M_1$ we can quickly distinguish states with different integer hulls.
However, each storage operation still needs to perform the expensive computation of the integer hull. 
In order to reduce number of the integer hull computations, we introduce the second mapping, denoted by $M_2$.
This second mapping serves as a~cache which maps a~given CPDBM to its unique representative in the storage.
Once a~CPDBM representative is resolved, it is saved in $M_2$. 

\begin{algorithm}[t]
  \label{algorithm:storage}
  \caption{State space storage operations}
  \SetAlgoLined
  \SetInd{0.65em}{0.65em}
  \DontPrintSemicolon

  \SetKwProg{proc}{Procedure}{}{}

   \begin{multicols}{2}
  \setcounter{AlgoLine}{0}

  \proc{$InitializeStorage()$}
  {
    \nl $Storage \gets \emptyset$;
     $M_1 \gets \emptyset$;
     $M_2 \gets \emptyset$ \;
  }
  
  \proc{$SetData(l,C,D,data)$}
  {
    \nl \If{ $M_2(C,D) \neq \emptyset $}
    {
        \nl $(C',D') \gets M_2(C,D)$\;
        \nl $Storage(l,C',D') \gets data$\;
    }
    \nl \Else 
    {
       \nl $IH \gets integerHull(C,D)$\;
       \nl \ForEach{$(C',D')$ in $M_1(IH)$} 
       {
          \nl \If{ $\sem{C',D'} = \sem{C,D}$}
          {
             \nl $M_2(C,D) \gets (C',D')$\;
             \nl $Storage(l,C',D') \gets data$\;
          } 
       } 
       \nl $M_2(C,D) \gets (C,D)$\;
       \nl $M_1(IH) \gets M_1(IH) \cup \{(C,D)\}$\;
       \nl $Storage(l,C,D) \gets data$\;
    }
  }
    \proc{$GetData(l,C,D)$}
  {
    \nl \If{ $M_2(C,D) \neq \emptyset $}
    {
        \nl $(C',D') \gets M_2(C,D)$\;
        \nl \KwRet $Storage(l,C',D')$\;
    }
    \nl \Else 
    {
       \nl $IH \gets integerHull(C,D)$\;
       \nl \ForEach{$(C',D')$ in $M_1(IH)$} 
       {
          \nl \If{ $\sem{C',D'} = \sem{C,D}$}
          {
             \nl $M_2(C,D) \gets (C',D')$\;
             \nl \KwRet $Storage(l,C',D')$\;
          } 
       } 
       \nl $M_2(C,D) \gets (C,D)$\;
       \nl $M_1(IH) \gets M_1(IH) \cup \{(C,D)\}$\;
       \nl $Storage(l,C,D) \gets initialData$\;
       \nl \KwRet $initialData$\;
    }
  }
  \end{multicols}
\end{algorithm} 

The pseudo code of state space storage operations is given in Algorithm~2.
Note that the procedures $SetData$ and $GetData$ are analogous. 
In our prototype tool, the two mappings as well as the storage itself are
implemented using hash tables. 
Checking whether two states are semantically equivalent is implemented using Parma Polyhedra Library~\cite{BagnaraHZ08SCP}. The library is also used to check parametric constraint satisfaction in the CPDBM operations.


\section{Experimental evaluation}
\label{section:evaluation}
We have implemented the proposed technique for integer parameter synthesis in
our proof-of-concept tool. Our goal is to compare our method with the explicit
enumeration technique. 
To be able to compare performance of both techniques under similar conditions
we also implemented the standard DBM-based LTL model checker for timed
automata.
Both tools use the same LTL to BA translation method~\cite{gastin2001fast} and
analogous extrapolation techniques.

Our evaluation was performed on a~parametric extension of the case study TrainGate~\cite{behrmann2004tutorial} provided with the tool UPPAAL.
In the TrainGate model we substitute all 6 integer bounds with separate parameters and consider two trains. This model is presented in Figure~\ref{fig:traingate}.
We checked two LTL properties. 
The first property \textit{prop1} states that the two trains can not cross the
bridge simultaneously ($G!(Train_1.cross$ and $Train_2.cross)$).
The second property \textit{prop2} states that whenever the first train is
approaching the bridge it will cross the bridge eventually ($G\  Train_1.appr
\implies F\ Train_1.cross$).
For all considered parameter valuations which do not lead to the deadlock, \textit{prop1} and \textit{prop2} are satisfied.

\begin{figure}[t]
  \centering
  \begin{minipage}{0.5\textwidth}
  \subfloat[Train]{\label{fig:train}
    \scalebox{1.0}{
      \includegraphics[scale=0.3]{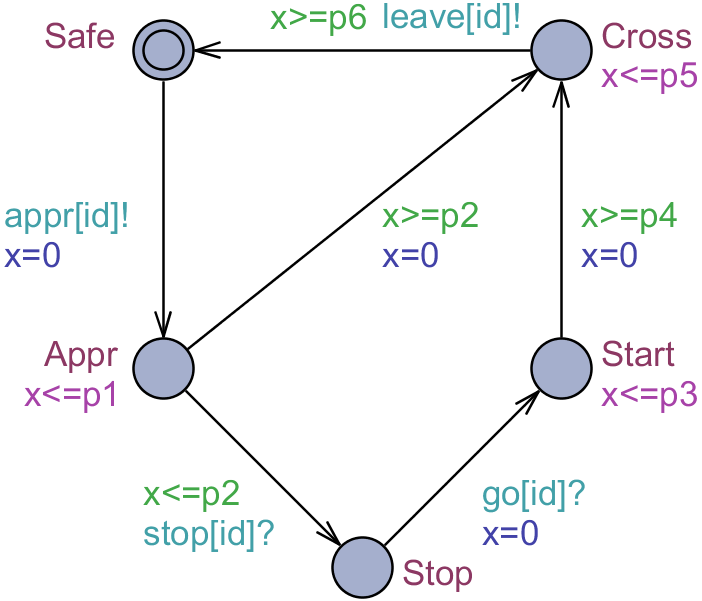}
    }
  }
  \end{minipage}
  \begin{minipage}{0.45\textwidth}
  \subfloat[Gate]{\label{fig:gate}
    \scalebox{1.0}{
    \includegraphics[scale=0.3]{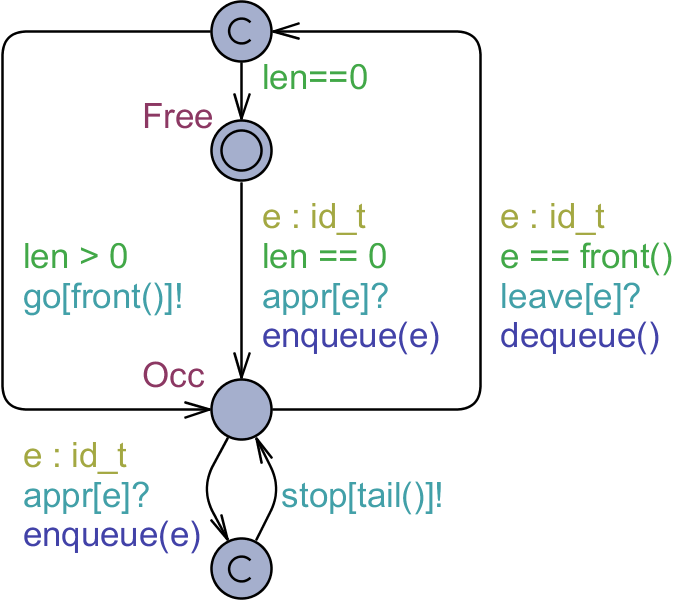}%
    }
  }
  \end{minipage}%
\caption{Parametric TrainGate Model}
\label{fig:traingate}
\end{figure}

Experiments were performed on a~PC with CPU i5-4690 and 16GB RAM. We considered
a~timeout of 12 hours for each task. We provide percentage of solved parameter valuations if the timeout was reached by explicit enumeration.

 Table \ref{table:count} shows the impact of the number of parameters used in
 the model. For models
 with a~small number of parameters and small value ranges the explicit enumeration
 can be more efficient. However, higher parameter
 count significantly favours the cumulative algorithm.
Table \ref{table:range} shows the impact of
the parameter range size on the execution times. Note that for larger parameter ranges the cumulative algorithm is faster than explicit enumeration.


\begin{table}[t]
  \caption{Impact of model parameter count}
  \label{table:count}
  \centering
\begin{tabular}{| c || r | r | r | r |}
 \hline
\multirow{7}{*}{ \begin{minipage}{2.5cm}\centering TrainGate model \\ 2 trains\end{minipage}} & 
\multicolumn{1}{|c|}{3 params } &
\multicolumn{1}{|c|}{4 params } & 
\multicolumn{1}{|c|}{5 params } &
\multicolumn{1}{|c|}{6 params } \\
 & $p_1 \in [20,50]$ & $p_1 \in [20,50]$ & \multicolumn{1}{|l|}{$p_1 \in [20,50]$} & \multicolumn{1}{|l|}{$p_1 \in [20,50]$}\\
 & $p_2 \in [10,50]$ & $p_2 \in [10,50]$ & \multicolumn{1}{|l|}{$p_2 \in [10,50]$} & \multicolumn{1}{|l|}{$p_2 \in [10,50]$} \\
 &\multicolumn{1}{|l|}{$p_3 \in [15,50]$} & $p_3 \in [15,50]$ & \multicolumn{1}{|l|}{$p_3 \in [15,50]$} & \multicolumn{1}{|l|}{$p_3 \in [15,50]$} \\ 
 &\multicolumn{1}{|l|}{$p_4 = 7$} &\multicolumn{1}{|l|}{$p_4 \in [~7,50]$} &\multicolumn{1}{|l|}{$p_4 \in [~7,50]$}&\multicolumn{1}{|l|}{$p_4 \in [~7,50]$}\\ 
 &\multicolumn{1}{|l|}{$p_5 = 5$}&\multicolumn{1}{|l|}{$p_5 = 5$} &\multicolumn{1}{|l|}{$p_5 \in [~5,50]$} & \multicolumn{1}{|l|}{$p_5 \in [~5,50]$} \\ 
 &\multicolumn{1}{|l|}{$p_6 = 3$} &\multicolumn{1}{|l|}{$p_6 = 3$}&\multicolumn{1}{|l|}{$p_6 = 3$}&\multicolumn{1}{|l|}{$p_6 \in [~3,50]$}\\ 
 \hline
prop1 explicit enumeration & 0:01:03 & 0:44:50 & Timeout(51\%) & Timeout(2\%)  \\
prop1 cumulative algorithm & 0:08:16 & 0:54:39 & 3:20:25 & 7:58:42 \\
prop2 explicit enumeration & 0:01:21 & 0:58:17 & Timeout(42\%) & Timeout(1\%) \\
prop2 cumulative algorithm & 0:12:20 & 1:23:37 & 5:11:01 & 10:48:16 \\
 \hline
\end{tabular}
\end{table}

\begin{table}[t]
  \caption{Impact of parameter range size}
  \label{table:range}
  \centering
\begin{tabular}{| c || r | r | r | r | }
 \hline
\multirow{3}{*}{ \begin{minipage}{2.5cm}\centering TrainGate model \\ 2 trains \\ 4 parameters \\ $p_5=5~p_6=3$\end{minipage}} 
 &\multicolumn{1}{|l|}{$p_1 \in [20,50]$} & \multicolumn{1}{|l|}{$p_1 \in [20,100]$} & \multicolumn{1}{|l|}{$p_1 \in [10,100]$}\\
 & \multicolumn{1}{|l|}{$p_2 \in [10,50]$} & \multicolumn{1}{|l|}{$p_2 \in [10,100]$} & \multicolumn{1}{|l|}{$p_2 \in [10,100]$}\\
 & \multicolumn{1}{|l|}{$p_3 \in [15, 50]$} & \multicolumn{1}{|l|}{$p_3 \in [15, 100]$} & \multicolumn{1}{|l|}{$p_3 \in [10,100]$}\\
 &\multicolumn{1}{|l|}{$p_4 \in [ ~7, 50]$} &\multicolumn{1}{|l|}{$p_4 \in [ ~7, 100]$} & \multicolumn{1}{|l|}{$p_4 \in [10,100]$}\\ 
 \hline
prop1 explicit enumeration & 0:44:50 & Timeout(68\%) & Timeout(63\%)  \\
prop1 cumulative algorithm & 0:54:39 & 7:39:43 & 6:56:49  \\
prop2 explicit enumeration & 0:58:17 & Timeout(56\%) & Timeout(53\%) \\
prop2 cumulative algorithm & 1:23:37 & 10:25:28 & 8:59:11 \\
 \hline
\end{tabular}
\end{table}

\section{Conclusion and Future Work}
\label{section:conclusion}
We have presented an algorithmic framework for the~bounded integer parameter synthesis for parametric timed automata with an~LTL specification. The proposed framework allows the avoidance of the explicit enumeration of all possible parameter valuations. 

Our symbolic technique is based on the zone abstraction and uses a~parametric extension of difference bound matrices.
To be able to employ the zone-based method successfully we have introduced a~finite abstraction called the pk-extrapolation.
To be able to synthesize all violating parameter valuations we have introduced the Cumulative NDFS algorithm which is an~extension of the NDFS algorithm.

We have implemented the proposed technique in an~experimental tool and our
experiments confirm that this technique can be significantly faster than the
explicit enumeration technique.

As for future work we plan to introduce different finite abstractions based on different extrapolations and compare their influence on the state space size.
We also plan to introduce a~parallel version of the cumulative algorithm.
Other area that can be investigated is the employment of different linear specification logics, e.g.~Clock-Aware LTL~\cite{ICTAC2014} which enables the use of clock-valuation constraints as atomic propositions.

\bibliographystyle{splncs}
\bibliography{main}

\appendix
\spnewtheorem*{reflemma}{Lemma}{\bfseries}{\rmfamily}
\spnewtheorem*{reftheorem}{Theorem}{\bfseries}{\rmfamily}

\section{Proof of Theorem~\ref{theorem:extraLassoSimulation}}
\label{app:extraLassoSimulation}
\subsection{Finiteness of pk-extrapolation}
We start with necessary definitions. In the following, we write $s_1 \in_v S_2$
if a~concrete state $s_1$ is contained in a symbolic state $S_2$; more
precisely if $s_1=(l_1,\eta)$ is a~concrete state from $\sem{A}_v$,
$S_2=(l_2,\sem{C,D})$ is a~symbolic state from $\sem{A}$, $l_1=l_2$, $v \in C$,
and $\eta \in \sem{D}_v$.

\begin{definition}[Time-abstracting simulation]
Given an~LTS $(S, s_0,\rightarrow)$, a~time-abstracting simulation $R$ over $S$
is a~binary relation satisfying the following conditions:
\begin{itemize}
\item $s_1 R s_2$ and $s_1 \stackrel{act}{\rightarrow} s_1'$ implies the existence of $s_2 \stackrel{act}{\rightarrow} s_2'$ such that $s_1'R s_2'$, and
\item $s_1 R s_2$ and $d_1 \in \mathbb{R}^{\geq 0}$ and $s_1 \stackrel{d_1}{\rightarrow} s_1'$ implies the existence of $d_2 \in \mathbb{R}^{\geq 0}$ and $s_2 \stackrel{d_2}{\rightarrow} s_2'$ such that $s_1'R s_2'$.
\end{itemize}
We define the largest simulation relation over $S$ ($\preccurlyeq_S$) in the
following way: $s \preccurlyeq_S s' $ if there exists a~time-abstracting
simulation $R$ with $(s,s') \in R$.
When $S$ is clear from the context we shall only use $\preccurlyeq$ instead of $\preccurlyeq_S$ in the following.
\end{definition}

\begin{definition}[PTBA abstract symbolic semantics]
Let $A=(M,F)$ be a~PTBA. An \emph{abstraction over} 
$\sem{A} = (\mathbb S_A,\mathbb S_{init},\Longrightarrow)$ is a~mapping 
$\alpha: \mathbb S_A \rightarrow 2^{\mathbb S_A}$ such that the~following conditions hold:
\begin{itemize}

\item $(l',\sem{C',D'}) \in \alpha((l,\sem{C,D}))$ implies $l=l' \wedge \sem{C'} \subseteq \sem{C} \wedge \sem{C', D} \subseteq \sem{C', D'}$,

\item for each $v \in \sem{C}$ there exist $S_1,S_2$ such that $S_2=(l,\sem{C',D'}) \in \alpha(S_1)$ and for each $s \in_v S_2$ there exists a~state $s' \in_v S_1$ satisfying $s \preccurlyeq s'$.

\end{itemize}
An abstraction $\alpha$ is called \emph{finite} if its image is finite.
An abstraction $\alpha$ over $\sem{A}$ induces a~new transition system 
$\sem{A}^{\alpha}=(\mathbb Q_A,\mathbb Q_{init},\Longrightarrow^{\alpha})$ where 
\begin{itemize}
\item $\mathbb Q_A = \{ S\ |\ S \in \alpha(S')\ and\ S' \in \mathbb S_A \}$,
\item $\mathbb Q_{init} = \{ S\ |\ S \in \alpha(S')\ and\ S' \in \mathbb S_{init}\}$, and
\item $Q \Longrightarrow^{\alpha} Q'$ if there is $S \in \mathbb S_A$ such that $Q' \in \alpha(S)$ and $Q \Longrightarrow S $.
\end{itemize}
An~accepting state, a~run and an~accepting run are defined analogously as in the $\sem{A}$ case.
If $\alpha$ is finite then $\sem{A}^{\alpha}$ can be viewed as a~B\"{u}chi automaton.
\end{definition}

\begin{lemma}
\label{lemma:extraFiniteness}
Let $A$ be a~PTBA. The pk-extrapolation is a~finite abstraction over $\sem{A} = (\mathbb S_A,\mathbb S_{init},\Longrightarrow)$.
\end{lemma}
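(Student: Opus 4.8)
The plan is to verify the two defining conditions of an abstraction for $\alpha_{pk}$, and then separately argue finiteness. First I would check the structural condition: given $(l,\sem{C,D})$ and any $(l,\sem{C',D'}) \in \alpha_{pk}(l,\sem{C,D})$, the location is unchanged by construction, and $\sem{C'} \subseteq \sem{C}$ holds because $C'$ is obtained from $C$ by adding constraints of the form $e_{ij} \bowtie M(x_i)$ (or $\bowtie -M(x_j)$) — it only shrinks the valuation set. For $\sem{C',D} \subseteq \sem{C',D'}$ I would argue entrywise: for each pair $i,j$, the chosen clause of Definition~\ref{definition:pkextrapolation} either keeps $D'_{ij}=D_{ij}$ verbatim, or replaces it by a looser bound ($<\infty$ in the cases where $C'$ forces $e_{ij}>M(x_i)$, or $<-M(x_j)$ in the case where $C'$ forces $e_{ij}<-M(x_j)$, which is still implied by the original tighter bound $e_{ij}$ once $e_{ij} < -M(x_j)$). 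Hence under any $v \in \sem{C'}$ every inequality of $D$ implies the corresponding inequality of $D'$, so $\sem{D}_v \subseteq \sem{D'}_v$, giving the inclusion of semantics.

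Next I would verify the simulation-covering condition. Fix $v \in \sem{C}$. The point is that among the finitely many CPDBMs produced by $\alpha_{pk}(l,\sem{C,D})$, the constraint sets split $\sem{C}$ into pieces (for each $i,j$ the two options $e_{ij}[v] \le M(x_i)$ vs.\ $>$, resp.\ the $-M(x_j)$ cases, are mutually exclusive and jointly exhaustive), so there is exactly one resulting state $S_2=(l,\sem{C',D'})$ with $v \in \sem{C'}$. I take $S_1 = (l,\sem{C,D})$, which is legitimate since $S_1 \Longrightarrow$-reachable states are exactly those over which $\alpha_{pk}$ is applied. For $s=(l,\eta) \in_v S_2$, I need $s' \in_v S_1$ with $s \preccurlyeq s'$. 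The natural candidate is the clock valuation obtained from $\eta$ by pulling back the "$\infty$-ed'' coordinates to any representative satisfying the original bound; the standard region/extrapolation argument shows that clock valuations agreeing below the per-clock maximal constant $M(x)$ are time-abstracting bisimilar, hence in particular $s \preccurlyeq s'$. This is essentially the nonparametric $k$-extrapolation soundness argument applied pointwise for the fixed valuation $v$, where the relevant maximal constant for clock $x$ is precisely $M(x)$ as defined before the lemma.

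For finiteness, I would observe that the image of $\alpha_{pk}$ consists, up to semantic equality $\sem{C',D'}$, of states whose $D'$-entries are each either an affine expression $e_{ij}$ already appearing as a bound somewhere in $\sem{A}$'s symbolic state space, or one of the finitely many "capped'' values $\infty$, $<-M(x_j)$. The catch is that the set of affine expressions $e_{ij}$ reachable in $\sem{A}$ is itself not obviously finite; so the finiteness argument must instead be phrased in terms of the \emph{semantics} $\sem{C',D'}$. After extrapolation every clock-difference bound is either unbounded or has an integer value in a bounded range determined by $lb,ub$ and the $M(x_i)$; more precisely, since we only consider bounded integer valuations, $\max_{lb,ub}(e)$ and hence each $M(x_i)$ is a fixed integer, and each retained entry $e_{ij}$ with $e_{ij} \le M(x_i)$ (or $\ge -M(x_j)$) ranges over a fixed finite set of integers once $v$ is quantified over the finite set $Val_{lb,ub}(P)$. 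Thus there are only finitely many possible semantics $\sem{C',D'}$, i.e.\ $\alpha_{pk}$ has finite image. The main obstacle I anticipate is exactly this last point — making rigorous that "cap all bounds above the maximal constant'' yields only finitely many \emph{semantic} classes despite the constraint sets $C'$ and the symbolic bounds $e_{ij}$ potentially proliferating; the right move is to pass from the syntactic CPDBM to the finite collection of concrete zones it induces over the finite parameter domain, and count those.
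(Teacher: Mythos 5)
Your proposal takes essentially the same route as the paper's own proof: it verifies the two conditions of an abstraction (the second via the observation that every extrapolated clock valuation is simulated, under the fixed valuation $v$, by an original one agreeing on all bounds below the maximal constants $M(x)$), and then establishes finiteness by counting finitely many locations, finitely many subsets of the bounded integer parameter domain, and finitely many possible evaluated clock-difference bounds (an integer in the range determined by the $M(x_i)$, or $\infty$) after extrapolation, which is exactly the paper's argument. The only slip is your claim that \emph{exactly} one extrapolated state contains a given $v$ (the first and third cases of Definition~\ref{definition:pkextrapolation} can overlap, so several may), but since only existence is needed for the covering condition, this does not affect the argument.
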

\begin{proof}
First, we prove that the pk-extrapolation is an abstraction. It is easy to see that the pk-extrapolation satisfies the first condition $(l',\sem{C',D'}) \in \alpha((l,$ $\sem{C,D}))$ implies $l=l' \wedge \sem{C'} \subseteq \sem{C} \wedge \sem{C', D} \subseteq \sem{C', D'}$. The validity of the second condition follows from the following observation. For each $v \in \sem{C}$ and each $\eta' \in \sem{D'}_v$ there exists $\eta \in \sem{D}_v$ such that for each clock~$x$ and each guard~$g$ the following implication holds: $\eta'(x) \models g \implies \eta(x) \models g$.

Now, we need to show that the pk-extrapolation is finite.
From the definition we have the fact that the number of locations is finite and the number of sets of bounded parameter valuations is finite. We need to show that there are only finitely many sets $\sem{C,D}$ when the pk-extrapolation is applied. 
This follows from the fact that for each $e_{ij}$ from $D$ and $v \in C$ the
expression $\sem{e_{ij}}_v$ can be evaluated only to a~value from the finite set
$\{ -M(x_i),-M(x_i)+1,\ldots, M(x_i)-1, M(x_i) ,\infty \}$.

\qed
\end{proof}

\subsection{Preservation of accepting runs}
We transform the proof of Theorem 1 of \cite{li09formal} and all corresponding lemmata into our parametric setup. For the sake of simplicity of the proof, we add labels to the transitions in $\sem{A}^{\alpha}$ in the following way. For each transition we use the location of a~source state as the transition label. Since labels are not used in the proposed method, it is safe to do that.

For the rest, let $v$ be a~parameter valuation, $A$ be a~PTBA, and $\alpha$ be a finite abstraction over $\sem{A}$. Then, we denote by $A\mid v$ a~timed B\"{u}chi automaton obtained from $A$ by replacing each parameter $p$ with the value $v(p)$. 
We use~$\equiv_{N}$ to denote the standard region abstraction~\cite{li09formal} over the timed automaton~$N$. 

We write $s_1 \cactiond{act_1,act_2,\ldots,\act_{k-1}} s_{k}$ if there exist $s_2,\ldots,s_{k-1}$ such that $s_1 \cactiond{act_1} s_2$, $s_2 \cactiond{act_2} s_3$, $\ldots$, and $s_{k-1} \cactiond{act_{k-1}}s_{k}$.
We write $s\cactiond{act_1,act_2,\ldots,\act_{k}}^{*} s'$ if $s \cactiond{act_1,act_2,\ldots,\act_{k}} s'$ or there exist some $s_1,s_2,\ldots,s_n (n \geq 1)$ such that \\$s \cactiond{act_1,act_2,\ldots,\act_{k}} s_1$,$s_1 \cactiond{act_1,act_2,\ldots,\act_{k}} s_2$,$\ldots$, $s_{n-1} \cactiond{act_1,act_2,\ldots,\act_{k}}s_{n}$, and $s_{n} \cactiond{act_1,act_2,\ldots,\act_{k}}s'$.

\begin{lemma}
\label{lemma:li1}\textbf{\upshape\cite{li09formal}}
The equivalence relation $\equiv_{A\mid v}$ is a~time-abstracting bisimulation.
\end{lemma}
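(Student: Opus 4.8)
The plan is to verify directly that $\equiv_{A\mid v}$, viewed as a~relation on the states of the LTS $\sem{A\mid v}$, satisfies the two clauses in the definition of a~time-abstracting simulation; since $\equiv_{A\mid v}$ is symmetric, this automatically makes it a~time-abstracting bisimulation. Write $N = A\mid v$. The verification rests on three standard structural properties of region equivalence, which I would state and briefly justify first. \emph{(i) Regions refine guard satisfaction:} if $\eta\equiv_N\eta'$ then $\eta\models g$ iff $\eta'\models g$ for every simple guard $g$ occurring, after substitution of $v$, in a~transition or invariant of $N$; this is immediate from the definition of a~region, since for each clock $x$ the integer part $\lfloor\eta(x)\rfloor$ as long as $\eta(x)$ stays below $M(x)$, and the predicate ``$\eta(x)$ has zero fractional part'', are region invariants, and every threshold $e[v]$ against which $x$ is compared satisfies $e[v]\le max_{lb,ub}(e)\le M(x)$ by the choice of the maximal constants. \emph{(ii) Regions are stable under resets:} $\eta\equiv_N\eta'$ implies $\eta\langle R\rangle\equiv_N\eta'\langle R\rangle$ for every $R\subseteq X$, because resetting the clocks of $R$ sends their integer and fractional parts to $0$ on both sides while leaving untouched the ordering of the fractional parts of the clocks outside $R$. \emph{(iii) Regions are stable under time passage:} if $\eta\equiv_N\eta'$ then for every $d_1\ge 0$ there is $d_2\ge 0$ with $\eta+d_1\equiv_N\eta'+d_2$.

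Granting (i)--(iii), both bisimulation clauses follow by unfolding the PTA semantics. For an~action step $(l,\eta)\action(l_1,\eta_1)$, arising from some $l\xrightarrow{g,R}_\Delta l_1$ with $\eta\models g$, $\eta_1=\eta\langle R\rangle$ and $\eta_1\models\Inv(l_1)$, and given $(l,\eta)\equiv_N(l,\eta')$: by~(i), $\eta'\models g$; by~(ii), $\eta'\langle R\rangle\equiv_N\eta_1$; and by~(i) applied to the invariant (also a~simple guard), $\eta'\langle R\rangle\models\Inv(l_1)$. Hence $(l,\eta')\action(l_1,\eta'\langle R\rangle)$ with $(l_1,\eta_1)\equiv_N(l_1,\eta'\langle R\rangle)$, as required. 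For a~delay step $(l,\eta)\delay{d_1}(l,\eta+d_1)$ with $\eta+d_1\models\Inv(l)$, and given $(l,\eta)\equiv_N(l,\eta')$: pick $d_2$ from~(iii), so $\eta+d_1\equiv_N\eta'+d_2$; then $\eta'+d_2\models\Inv(l)$ by~(i), hence $(l,\eta')\delay{d_2}(l,\eta'+d_2)$ matches. As $\equiv_N$ equals its own inverse, it is a~time-abstracting bisimulation.

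The only clause requiring real work is~(iii), and it is exactly the classical time-successor analysis for regions; I would carry it out along the following lines (or simply cite it from \cite{li09formal}, where it appears as a~lemma of the region construction). For a~valuation $\eta$, the region of $\eta+d$ is eventually constant in $d$ (the ``unbounded'' region, in which every clock exceeds its maximal constant), and before stabilising it runs through a~finite, time-ordered sequence of regions $r_0, r_1,\dots, r_m$ with $r_0$ the region of $\eta$ and each $r_{k+1}$ the unique time successor of $r_k$; the point is that this sequence depends only on the region of $\eta$, not on the chosen representative. One then shows by induction on the number of region boundaries crossed between $\eta$ and $\eta+d_1$ that from any valuation in $r_k$ some delay leads into $r_{k+1}$, the inductive step being the bookkeeping of which clock's fractional part next becomes integral and how this reorders the fractional parts of the remaining clocks; the base case (no boundary crossed) is trivial. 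Since $\eta$ and $\eta'$ have the same region, their time-successor sequences coincide, so the region $r_k$ that contains $\eta+d_1$ is reached from $\eta'$ by some delay $d_2$, and any two valuations inside $r_k$ are region-equivalent, which yields $\eta+d_1\equiv_N\eta'+d_2$. This fractional-part bookkeeping in~(iii) is the main obstacle; everything else is a~direct unfolding of the definitions of region equivalence and of the PTA semantics.
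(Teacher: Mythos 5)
Your proof is correct in outline, but note that it does not parallel anything in the paper: the paper gives no proof of this lemma at all---the bracketed citation in the statement is the justification, the result being imported wholesale from \cite{li09formal} as a standard property of the region construction for the instantiated, parameter-free automaton $A\mid v$. What you have written is essentially the classical argument behind that cited result: region equivalence respects the (non-diagonal, integer-threshold) guards and invariants of $A\mid v$, is stable under resets, and admits the time-successor analysis, after which symmetry upgrades the simulation to a bisimulation. That is a legitimate, self-contained alternative to citing, and it has the advantage of making explicit which features of the model are actually used (simple guards only, invariants checked at the endpoint of a delay, the zero clock fixed at $0$). One caveat in your step (i): $\equiv_{A\mid v}$ is the region equivalence of the \emph{instantiated} automaton, so its maximal constants are those occurring in $A\mid v$ itself; your justification via $e[v]\le max_{lb,ub}(e)\le M(x)$ only covers upper-bound comparisons. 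For a lower-bound guard $x_0-x_j\le e$ the instantiated threshold for $x_j$ is $-e[v]$, and when $e$ has negative parameter coefficients $max_{lb,ub}(e)$ need not dominate $-e[v]$, so keying the regions to the paper's $M(x)$ would not suffice in general; keying them to the constants of $A\mid v$ (as the standard construction in \cite{li09formal} does, and as the later finiteness argument in Lemma~\ref{lemma:li9} only needs) repairs this without affecting the rest of your argument.
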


\begin{lemma}
\label{lemma:li2}\textbf{\upshape\cite{li09formal}}
Let $s_1$,$s_1'$,$s_2$ be concrete states in $\sem{A}_v$, $R$ be a time-abstracting simulation and $s_1Rs_1'$. If $s_1 \cactiond{act_1,act_2\ldots,act_k} s_2$, then there exists a~concrete state $s_2'$ in $\sem{A}_v$ such that $s'_1 \cactiond{act_1,act_2\ldots,act_k} s'_2$.
\end{lemma}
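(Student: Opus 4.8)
The plan is to prove the lemma by induction on the length $k$ of the transition sequence, and in fact to establish the slightly stronger statement that the reconstructed state $s_2'$ can be chosen with $s_2 R s_2'$; this strengthening is exactly what makes the induction close. The only ingredient is the definition of a time-abstracting simulation, used in combination with the observation that each composite step $\cactiond{act_i}$ is, by its definition, an action transition immediately followed by a delay transition. For the base case $k=1$, unfolding $s_1 \cactiond{act_1} s_2$ gives a concrete state $t$ and a delay $d \in \mathbb{R}^{\ge 0}$ with $s_1 \action t \xrightarrow{d} s_2$. Applying the first clause of the definition of time-abstracting simulation to $s_1 R s_1'$ and $s_1 \action t$ produces $t'$ with $s_1' \action t'$ and $t R t'$; applying the second clause to $t R t'$ and $t \xrightarrow{d} s_2$ produces a delay $d'$ and a state $s_2'$ with $t' \xrightarrow{d'} s_2'$ and $s_2 R s_2'$. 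Concatenating, $s_1' \action t' \xrightarrow{d'} s_2'$, i.e.\ $s_1' \cactiond{act_1} s_2'$ with $s_2 R s_2'$.

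For the inductive step, write the sequence as $s_1 \cactiond{act_1} u \cactiond{act_2,\ldots,act_k} s_2$. The base case applied to $s_1 R s_1'$ yields $u'$ with $s_1' \cactiond{act_1} u'$ and $u R u'$; the induction hypothesis applied to $u R u'$ and $u \cactiond{act_2,\ldots,act_k} s_2$ yields $s_2'$ with $u' \cactiond{act_2,\ldots,act_k} s_2'$ and $s_2 R s_2'$. Concatenating the two paths gives $s_1' \cactiond{act_1,act_2,\ldots,act_k} s_2'$, which completes the induction. The same argument, re-run on the reflexive--transitive closure, gives the corresponding statement for the variant $\cactiond{act_1,\ldots,act_k}^{*}$ used later in the appendix, and one should note that the $s_2'$ produced need not be unique.

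The step that requires genuine care -- as opposed to pure bookkeeping -- is checking that the reconstructed path carries exactly the same label sequence $act_1,\ldots,act_k$, rather than merely some sequence of action-then-delay steps. Under the labelling convention fixed earlier in the appendix (a transition is labelled by the location of its source state), this comes down to the fact that a time-abstracting simulation, being a relation on the \emph{labelled} transition system, matches an action transition only by an action transition out of a state carrying the same label; for the relations actually used here -- the region equivalence $\equiv_{A\mid v}$ of Lemma~\ref{lemma:li1} and the simulations derived from it -- related states share their location, so the match $s_1' \action t'$ produced in the base case indeed leaves from the same label as $s_1 \action t$. Making this label-preservation explicit at each of the $k$ matching steps is where the bulk of the (routine) detail lies.
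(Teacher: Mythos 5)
Your proof is correct. Note that the paper does not prove this lemma at all --- it is imported verbatim from the cited reference \cite{li09formal} --- so there is no in-paper argument to compare against; your induction on $k$, decomposing each $\cactiond{act_i}$ step into an action transition followed by a delay and invoking the two clauses of the time-abstracting simulation, is the standard argument one would expect there. Two remarks. First, your strengthening to $s_2\, R\, s_2'$ is not just what closes the induction: it is the form the appendix actually relies on later (e.g.\ in the proofs of Lemma~\ref{lemma:li7}, Lemma~\ref{lemma:li8} and Theorem~\ref{theorem:lassoSimulation}, where the conclusion $s'' \preccurlyeq s_2$ is used), even though the lemma as stated omits it --- so making it explicit is an improvement, not an overreach. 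Second, your worry about label preservation is legitimate given the labelling convention (source location as label), but it resolves more simply than you suggest: once the LTS is taken as labelled, the first clause of the simulation definition already requires the matching transition to carry the \emph{same} label $act$, so no appeal to ``related states share their location'' is needed; that observation is sufficient but not necessary.
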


\begin{lemma}
\label{lemma:li3}
Let $s_1$,$s_2$ be concrete states in $\sem{A}_v$. If $s_1 \cactiond{act_1,act_2\ldots,act_k} s_2$ and $s_1 \equiv_{A\mid v} s_2$, then there is an infinite sequence of concrete states $s_1 s_2 \ldots $ in $\sem{A}_v$ such that for each $i \geq 1$, $s_i \cactiond{act_1,act_2,\ldots, \act_k} s_{i+1}$.
\end{lemma}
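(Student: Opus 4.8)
The plan is to build the infinite sequence $s_1 s_2 s_3 \ldots$ by induction, carrying along the extra invariant that consecutive states are region equivalent; this invariant is exactly what is needed to re-apply the inductive step, since after a single use of Lemma~\ref{lemma:li2} one would only know $s_2 \cactiond{act_1,\ldots,act_k} s_3$ but not $s_2 \equiv_{A\mid v} s_3$, blocking iteration. The driving fact is that $\equiv_{A\mid v}$ is a time-abstracting bisimulation (Lemma~\ref{lemma:li1}); being symmetric, it is in particular a time-abstracting simulation, so Lemma~\ref{lemma:li2} applies to it.

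First I would record the bisimulation-strengthening of Lemma~\ref{lemma:li2}: if $s \equiv_{A\mid v} s'$ and $s \cactiond{act_1,\ldots,act_k} t$, then there is a concrete state $t'$ of $\sem{A}_v$ with $s' \cactiond{act_1,\ldots,act_k} t'$ \emph{and} $t \equiv_{A\mid v} t'$. This is proved just like Lemma~\ref{lemma:li2}, but additionally propagating the equivalence: each composite step $\cactiond{act_i}$ is decomposed into its action transition followed by its delay transition, the two defining clauses of a time-abstracting bisimulation are applied to the action step and to the delay step respectively, and the equivalence of the intermediate states is carried forward. The labels $act_1,\ldots,act_k$ are matched because region-equivalent states share the same location and hence enable the same discrete transitions (the matching delays may differ, which is irrelevant, as $\cactiond{act_1,\ldots,act_k}$ constrains only the discrete labels).

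The construction then proceeds by induction on $i \ge 1$, maintaining the invariant: $s_1,\ldots,s_{i+1}$ are defined, $s_j \cactiond{act_1,\ldots,act_k} s_{j+1}$ for all $j \le i$, and $s_i \equiv_{A\mid v} s_{i+1}$. For $i=1$ this is precisely the hypothesis of the lemma. For the inductive step, apply the strengthened lemma with $s := s_i$, $s' := s_{i+1}$, $t := s_{i+1}$; this yields a state $s_{i+2}$ with $s_{i+1} \cactiond{act_1,\ldots,act_k} s_{i+2}$ and $s_{i+1} \equiv_{A\mid v} s_{i+2}$, which re-establishes the invariant. The sequence $s_1 s_2 s_3 \ldots$ obtained in the limit has the required property.

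I expect the only delicate point to be the strengthening of Lemma~\ref{lemma:li2}: one must thread the equivalence relation through the decomposition of the composite transitions into alternating action and delay steps and check that the location labels are preserved along the way. The latter is immediate from the definition of the region equivalence $\equiv_{A\mid v}$, and the former is the standard ``bisimulations match finite runs'' argument, so beyond careful bookkeeping nothing genuinely new is required; note also that no finiteness of the region abstraction is needed here, only its bisimulation property.
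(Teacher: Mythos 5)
Your proof is correct and follows essentially the same route as the paper: the paper also constructs the sequence by induction, maintaining $s_i \equiv_{A\mid v} s_{i+1}$ as the invariant and extending it via Lemma~\ref{lemma:li2} applied to the bisimulation $\equiv_{A\mid v}$ (Lemma~\ref{lemma:li1}). The only cosmetic difference is that you spell out the relation-propagating conclusion as an explicit ``strengthening'' of Lemma~\ref{lemma:li2}, whereas the paper uses that conclusion directly (its statement of Lemma~\ref{lemma:li2} elides the clause $s_2\,R\,s_2'$, which is clearly intended from~\cite{li09formal}).
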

\begin{proof}
We define $s_k$ ($k=3,4,\ldots$) by induction on $k$.

\textbf{Basis:} By Lemma \ref{lemma:li1} and lemma \ref{lemma:li2} there exists $s_3$ such that 
$s_2 \cactiond{act_1,act_2,\ldots, \act_k} s_{3}$, and $s_2 \equiv_{A\mid v } s_3$.

\textbf{Assumption:} Assume that we have $s_1,s_2,\dots,s_k$ such that for each $i \in \{1,2,\ldots, k-1\}$, 
$s_i \cactiond{act_1,act_2,\ldots, \act_k} s_{i+1}$, and $s_{k-1} \equiv_{A\mid v } s_k$.

\textbf{Step:} From $s_{k-1} \cactiond{act_1,act_2,\ldots, \act_k} s_{k}$, and $s_{k-1} \equiv_{A\mid v } s_k$, by Lemma \ref{lemma:li2} there exists $s_{k+1}$ such that $s_{k} \cactiond{act_1,act_2,\ldots, \act_k} s_{k+1}$, and $s_{k} \equiv_{A\mid v } s_{k+1}$.

Thus, using induction, we get an infinite sequence of states $s_1s_2\ldots$ such that for each $i \geq 1 $, 
$s_i \cactiond{act_1,act_2,\ldots, \act_k} s_{i+1}$.
\qed
\end{proof}

\begin{lemma}
\label{lemma:li5}
\label{lemma:simulation}
Let $s',s_1,s_2$ be concrete states in $\sem{A}_v$ and $S_1$,$S_2$ be symbolic states in $\sem{A}$.
\begin{enumerate}
\item If $S_1 \Longrightarrow S_2$ and $s' \in_v S_2$, then there exist concrete state $s$ in $\sem{A}_v$ such that $ s \stackrel{act}{\longrightarrow}_d s'$.
\item If $s_1 \stackrel{act}{\longrightarrow}_d s_2$ and $s_1 \in_v S_1$, then $S_1 \Longrightarrow S_2$ for some symbolic state $S_2$ in $\sem{A}$ with $s_2 \in_v S_2$.
\end{enumerate}
\end{lemma}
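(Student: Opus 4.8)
The plan is to establish both parts by walking, step by step, along the chain of CPDBM operations that the definition of $\Longrightarrow$ composes — apply the guard $g$, canonise, reset $R$, take the time successor, apply $\Inv(l')$, canonise again — and showing that along a single such step the symbolic and the concrete transition systems agree for every fixed parameter valuation. This is the parametric counterpart of the classical soundness and completeness of the zone graph for timed automata; the only genuinely new ingredient is the parameter splitting performed by $(C,D)[g]$ and $(C,D)_c$.

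First I would isolate an auxiliary observation (perhaps as its own lemma): for a CPDBM $(C,D)$ in canonical form, a valuation $v\in\sem{C}$, and any one of the operations above, the branches it produces have constraint sets that together cover $\sem{C}$ — so $v$ belongs to exactly one branch $(C_o,D_o)$ — and for that branch $\sem{D_o}_v$ is exactly the image of $\sem{D}_v$ under the corresponding concrete operation: intersection with $\sem{g}_v$ for the guard, the pointwise $\langle R\rangle$ for the reset, the forward time closure $\{\eta+d\mid\eta\in\sem{D}_v,\ d\ge 0\}$ for the time successor, and the identity for canonisation. The guard and canonisation clauses are precisely the case splits already written in the definitions of $(C,D)[g]$ and $\longrightarrow_{FW}$; the reset clause is routine; the time-successor clause uses that $D$ is canonical together with the fact that simple guards and invariants define convex sets of clock valuations, so that deleting the upper-bound entries $e_{i0}$ produces exactly the forward closure. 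Since reset and time successor preserve the canonical form (as noted in the text) and canonisation restores it, the canonicity hypothesis is available wherever it is invoked.

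Given this, part~2 is obtained by running the chain forwards: from $s_1=(l,\eta_1)\in_v S_1=(l,\sem{C,D})$ and a witness $s_1\action(l',\eta'')\delay{d}(l',\eta''+d)=s_2$ — so $l \xrightarrow{g,R}_{\Delta} l'$, $(v,\eta_1)\models g$, $\eta''=\eta_1\langle R\rangle$, $(v,\eta'')\models\Inv(l')$ and $(v,\eta''+d)\models\Inv(l')$ — the auxiliary observation supplies successive branches $(C'',D'')\in(C,D)[g]$, $(C''_c,D''_c)\in(C'',D'')_c$, $(C',D')\in(C''_c,D''_c\langle R\rangle^\uparrow)[\Inv(l')]$ and $(C'_c,D'_c)\in(C',D')_c$, all satisfied by $v$, with $\eta_1$ surviving in $D''$ and $D''_c$, $\eta''$ in $D''_c\langle R\rangle$, and $\eta''+d$ in $(D''_c\langle R\rangle)^\uparrow$, in $D'$ and in $D'_c$; then $S_2:=(l',\sem{C'_c,D'_c})$ witnesses $S_1\Longrightarrow S_2$ with $s_2\in_v S_2$. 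For part~1 I run the chain backwards: from $s'=(l',\eta')\in_v S_2=(l',\sem{C'_c,D'_c})$ we get $\eta'\in\sem{D'}_v$ (canonisation is semantics-preserving), hence $(v,\eta')\models\Inv(l')$ and $\eta'\in\sem{(D''_c\langle R\rangle)^\uparrow}_v$ (guard application only restricts); the time-successor characterisation gives $\eta''\in\sem{D''_c\langle R\rangle}_v$ and $d\ge 0$ with $\eta'=\eta''+d$, and choosing $\eta''$ suitably (see below) we may further assume $(v,\eta'')\models\Inv(l')$; then there is $\eta_1\in\sem{D''_c}_v$ with $\eta_1\langle R\rangle=\eta''$, and $\eta_1\in\sem{D''}_v\subseteq\sem{D}_v$ with $(v,\eta_1)\models g$ by the guard-application case split; so $s:=(l,\eta_1)\in_v S_1$, and $\eta''$, $d$ exhibit $s\action(l',\eta'')\delay{d}(l',\eta')=s'$, i.e. $s\cactiond{act}s'$.

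The hard part will be the time-successor step and its interaction with the target invariant: one must argue, now with parametric entries, that on a canonical CPDBM the operational rule $D^\uparrow$ really yields the forward time closure of $\sem{D}_v$ (the inclusion $\sem{D^\uparrow}_v\subseteq\{\eta+d\mid\eta\in\sem{D}_v\}$ is the part that needs canonicity), and, in part~1, that the reconstructed reset valuation $\eta''$ can be taken to also satisfy $\Inv(l')$. For the latter I would take $\eta''$ to be the latest un-delay of $\eta'$ lying in $\sem{D''_c\langle R\rangle}_v$ (with a small perturbation to cope with strict bounds) and use the convexity of $\sem{\Inv(l')}_v$ together with the shape of the simple guard $\Inv(l')$ — as the clocks decrease along the delay direction they stay non-negative and the upper-bound conjuncts stay satisfied. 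The other point that genuinely has to be checked rather than waved through is that the branch constraint sets produced by $[g]$, $(\cdot)_c$ and the reset/time-successor steps jointly exhaust $\sem{C}$, so that for the given $v$ the chain can in fact be traced in both directions.
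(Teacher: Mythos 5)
Your overall architecture is the right one, and it is in fact the one behind the proof that the paper itself does not spell out: the paper's ``proof'' of this lemma is just a pointer to Lemmas~3.16 and~3.18 of \cite{hune02linear}, which are obtained there precisely by the operation-wise correspondence you describe (for each of guard application, canonisation, reset and time successor, the branch constraint sets jointly cover $\sem{C}$, and per branch and per valuation $v$ the PDBM semantics transforms exactly as the concrete operation does), composed along the chain defining $\Longrightarrow$. So your route is a self-contained reconstruction of the cited argument rather than a genuinely different method; part~2 and most of part~1 go through as you outline, and you even establish the stronger form of part~1 with $s \in_v S_1$, which is what the proof of Lemma~\ref{lemma:li6} actually uses even though the statement here omits it.

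The one place where your sketch leans on something the paper never states is the step you yourself single out as hard: recovering an un-delayed valuation $\eta''\in\sem{D''_c\langle R\rangle}_v$ with $(v,\eta'')\models\Inv(l')$. Your justification (``the upper-bound conjuncts stay satisfied as the clocks decrease'') silently assumes that $\Inv(l')$ imposes only upper bounds on clocks. In this paper an invariant is an arbitrary simple guard, so it may contain conjuncts $x_0-x\le e$ with $e$ evaluating to a negative value, i.e.\ lower bounds; and the symbolic semantics applies $\Inv(l')$ only \emph{after} the $\uparrow$ operation, whereas the concrete action transition checks $\Inv(l')$ already at the post-reset valuation. With a lower-bound invariant these genuinely differ: a transition resetting $x$ into a location with invariant $x\ge 5$ yields a non-empty symbolic successor (the zone obtained by up-then-intersect), yet no concrete state has an $\cactiond{act}$ transition into any of its members, so part~1 as stated would fail. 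You therefore need to make the standard (UPPAAL-style) downward-closedness assumption on invariants explicit --- under it any un-delay of $\eta'$ lying in the reset zone already satisfies $\Inv(l')$, and your ``latest un-delay plus perturbation'' device is not even needed. Apart from making that hypothesis explicit and actually writing out the per-operation coverage and semantics lemmas you postulate (including the canonicity-dependent facts about reset and $\uparrow$), your plan is sound.
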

\begin{proof}
We refer the reader to the proofs of Lemma 3.16 and Lemma 3.18 in \cite{hune02linear}.
\end{proof}

\begin{lemma}
\label{lemma:li6}
\label{lemma:simulationWithAbstraction}
Let $s,s_1$,$s_2$ be concrete states in $\sem{A}_v$, and $Q_1$,$Q_2$ be symbolic states in $\sem{A}^{\alpha}$.
\begin{enumerate}
\item If $Q_1 \Longrightarrow_{\alpha} Q_2$ and $s \in_v Q_2$, then there exist concrete states $s'_1,s'_2$ in $\sem{A}_v$ such that $ s'_1 \stackrel{act}{\longrightarrow}_d s'_2, s'_1 \in_v Q_1$ and $s \preccurlyeq s'_2$.
\item If $s_1 \stackrel{act}{\longrightarrow}_d s_2$ and $s_1 \in_v Q_1$, then $Q_1 \Longrightarrow_{\alpha} Q_2$ for some symbolic state $Q_2$ in $\sem{A}^{\alpha}$ with $s_2 \in_v Q_2$.
\end{enumerate}
\end{lemma}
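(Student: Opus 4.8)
The plan is to prove Lemma~\ref{lemma:li6} by reducing each of the two claims to the corresponding claim of Lemma~\ref{lemma:li5} (the ``one-step correspondence without abstraction'') and then inserting the extra layer introduced by the abstraction $\alpha$. Recall that by definition of $\sem{A}^{\alpha}$, a transition $Q_1 \Longrightarrow_{\alpha} Q_2$ means there is some symbolic state $S$ in $\sem{A}$ with $Q_1 \Longrightarrow S$ and $Q_2 \in \alpha(S)$; and the initial/source states $Q_1$ themselves lie in $\alpha(S')$ for some $S' \in \mathbb{S}_A$. So the whole argument is a matter of threading a concrete witness through this two-step picture, using the two properties of an abstraction: (i) $\sem{C',D} \subseteq \sem{C',D'}$ whenever $(l,\sem{C',D'}) \in \alpha((l,\sem{C,D}))$, i.e.\ abstraction only enlarges the zone (over the restricted parameter set), and (ii) the simulation condition that every concrete state inside an abstracted symbolic state is simulated by a concrete state inside some pre-image.

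For part (2), the easy direction: assume $s_1 \stackrel{act}{\longrightarrow}_d s_2$ with $s_1 \in_v Q_1$. Since $Q_1 \in \mathbb{Q}_A$, pick $S_1 \in \mathbb{S}_A$ with $Q_1 \in \alpha(S_1)$; then $s_1 \in_v Q_1$ together with property (i) of the abstraction and the fact that $v \in \sem{C}$ is preserved under $\sem{C'} \subseteq \sem{C}$ gives $s_1 \in_v S_1$ — actually we must be slightly careful here, since abstraction shrinks the constraint set but enlarges the zone, so $s_1 \in_v Q_1$ does not immediately give $s_1 \in_v S_1$. The clean route is instead: we do not need $s_1 \in_v S_1$; we apply Lemma~\ref{lemma:li5}(2) directly to $Q_1$ viewed as a symbolic state of $\sem{A}$ (every element of $\mathbb{Q}_A$ is itself a symbolic state, being in the image of $\alpha$ which maps into $2^{\mathbb{S}_A}$, and the symbolic transition relation $\Longrightarrow$ is defined on all of $\mathbb{S}_A$). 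This yields $S_2$ with $Q_1 \Longrightarrow S_2$ and $s_2 \in_v S_2$. Now using the second abstraction property applied to $S_2$ we obtain $Q_2 \in \alpha(S_2)$, hence $Q_1 \Longrightarrow_\alpha Q_2$; and since abstraction enlarges the zone, $s_2 \in_v S_2$ gives $s_2 \in_v Q_2$. For part (1), assume $Q_1 \Longrightarrow_\alpha Q_2$ and $s \in_v Q_2$. Unfold: there is $S$ with $Q_1 \Longrightarrow S$ and $Q_2 \in \alpha(S)$. By the simulation property of $\alpha$ applied to $Q_2 \in \alpha(S)$, from $s \in_v Q_2$ we get a concrete state $\tilde{s} \in_v S$ with $s \preccurlyeq \tilde{s}$. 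Apply Lemma~\ref{lemma:li5}(1) to $Q_1 \Longrightarrow S$ and $\tilde s \in_v S$: this produces a concrete state $s_1'$ with $s_1' \stackrel{act}{\longrightarrow}_d \tilde s$ and $s_1' \in_v Q_1$. Setting $s_2' := \tilde s$ we have $s_1' \stackrel{act}{\longrightarrow}_d s_2'$, $s_1' \in_v Q_1$, and $s \preccurlyeq s_2'$, as required.

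The subtle point I expect to be the actual obstacle is matching the precise form of the abstraction conditions to what Lemma~\ref{lemma:li5} needs — specifically, making sure that ``$s \in_v Q$'' behaves correctly under the asymmetry that $\alpha$ shrinks the constraint set $C$ but only grows the zone $D$. One must check that whenever we transfer a concrete state from a pre-image $S$ to its abstraction $Q$, the parameter valuation $v$ that witnessed membership is still a valuation of the (smaller) constraint set of $Q$; this is exactly guaranteed because the abstraction's transition $\Longrightarrow_\alpha$ is only ever traversed with valuations respected by the target states, but it needs to be stated carefully. A second bookkeeping issue is that $\Longrightarrow$ in $\sem{A}$ already bundles ``apply guard, canonise, reset, time-successor, apply invariant, canonise'', so Lemma~\ref{lemma:li5} is doing the heavy lifting of relating this compound symbolic step to a single concrete $\stackrel{act}{\longrightarrow}_d$ step; here we merely cite it. Once these membership technicalities are pinned down, both parts are short diagram chases, and the lemma will serve (together with Lemma~\ref{lemma:li3} and the simulation/bisimulation lemmata) as the inductive engine for lifting a single lasso in $\sem{A}^{\alpha}$ to an infinite accepting run in $\sem{A}_v$, which is the content of Theorem~\ref{theorem:extraLassoSimulation}.
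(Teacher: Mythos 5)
Your proposal is correct and takes essentially the same route as the paper's proof: both parts unfold the definition of $\Longrightarrow_{\alpha}$ (a concrete $S$ with $Q_1 \Longrightarrow S$ and $Q_2 \in \alpha(S)$) and reduce to Lemma~\ref{lemma:li5}, using the simulation clause of the abstraction to pull $s$ back into $S$ for part~(1) and choosing an abstracted successor $Q_2 \in \alpha(S)$ for part~(2). The membership subtlety you flag—selecting $Q_2$ whose constraint set still contains $v$ so that $s_2 \in_v Q_2$—is glossed over in the paper's proof in exactly the same way, so there is no real divergence.
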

\begin{proof}
\begin{enumerate}
\item
From $Q_1 \Longrightarrow_{\alpha} Q_2$ we know that there exists $S$ such that $Q_1 \Longrightarrow S$ and $Q_2 \in \alpha(S)$. For any $s \in_v Q_2$, since $Q_2 \in \alpha(S)$, there is $s'_2 \in_v S$ such that $s \preccurlyeq s'_2$. Since $Q_1 \Longrightarrow S$ and $s'_1 \in_v S$, by Lemma \ref{lemma:simulation}, there is a~$s'_1 \in_v Q_1$ such that $ s'_1 \stackrel{act}{\longrightarrow}_d s'_2$.
\item
By Lemma \ref{lemma:simulation} there is a~$S$ such that $Q_1 \Longrightarrow S$ with $s_2 \in_v S$. Let $Q_2 \in \alpha(S)$ then $Q_1 \Longrightarrow_{\alpha} Q_2$ and $s_2 \in_v Q_2$.
\end{enumerate}
\end{proof}

\begin{lemma}
\label{lemma:li7}
Let $s$ be a~concrete state in $\sem{A}_v$, $Q_1$,$Q_2$ be symbolic states in $\sem{A}^{\alpha}$.
If $Q_1 \saction{act_1,act_2,\ldots,act_k} Q_2$, 
and $s\in_v Q_2$, then there exist concrete states $s_1,s_2$ in $\sem{A}_v$ such that $s_1 \in_v Q_1$, $s_1 \cactiond{act_1,act_2,\ldots,act_k} s_2$,
and $s \preccurlyeq s_2$. 
\end{lemma}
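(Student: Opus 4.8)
The plan is to induct on the length $k$ of the abstract symbolic path, and I would first record two auxiliary facts. The first is that $\preccurlyeq$ is a preorder: it is reflexive because the identity relation is a time-abstracting simulation, and transitive because the relational composition of two time-abstracting simulations is again one. The second is a mild strengthening of Lemma~\ref{lemma:li2} that I would establish (or, equivalently, read off its proof in~\cite{li09formal}): if $R$ is a time-abstracting simulation with $s_1 R s_1'$ and $s_1 \cactiond{act_1,\ldots,act_k} s_2$, then there is $s_2'$ with $s_1' \cactiond{act_1,\ldots,act_k} s_2'$ \emph{and} $s_2 R s_2'$, so in particular $s_2 \preccurlyeq s_2'$. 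This follows by unfolding each $\cactiond{act_i}$ into an action transition followed by a delay transition and applying the two defining clauses of a time-abstracting simulation in turn, keeping the witnessing relation between all matched intermediate and final states.

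For the base case $k=1$ I would simply note that the hypothesis reduces to $Q_1 \Longrightarrow_\alpha Q_2$ with $s \in_v Q_2$, so Lemma~\ref{lemma:li6}(1) directly provides $s_1 \in_v Q_1$ and $s_2$ with $s_1 \cactiond{act_1} s_2$ and $s \preccurlyeq s_2$.

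For the inductive step I would decompose the path as $Q_1 \saction{act_1,\ldots,act_{k-1}} Q'$ followed by a single step $Q' \Longrightarrow_\alpha Q_2$. First apply Lemma~\ref{lemma:li6}(1) to $Q' \Longrightarrow_\alpha Q_2$ with $s \in_v Q_2$, obtaining $s_1' \in_v Q'$ and $s_2'$ with $s_1' \cactiond{act_k} s_2'$ and $s \preccurlyeq s_2'$. Then apply the induction hypothesis to $Q_1 \saction{act_1,\ldots,act_{k-1}} Q'$ together with the concrete state $s_1' \in_v Q'$, obtaining $s_1 \in_v Q_1$ and $t$ with $s_1 \cactiond{act_1,\ldots,act_{k-1}} t$ and $s_1' \preccurlyeq t$. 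Finally, take a time-abstracting simulation witnessing $s_1' \preccurlyeq t$ and feed the one-step run $s_1' \cactiond{act_k} s_2'$ into the strengthened Lemma~\ref{lemma:li2}, obtaining $s_2$ with $t \cactiond{act_k} s_2$ and $s_2' \preccurlyeq s_2$. Concatenating the two concrete runs gives $s_1 \cactiond{act_1,\ldots,act_{k-1}} t \cactiond{act_k} s_2$, i.e.\ $s_1 \cactiond{act_1,\ldots,act_k} s_2$, and transitivity of $\preccurlyeq$ yields $s \preccurlyeq s_2' \preccurlyeq s_2$; together with $s_1 \in_v Q_1$ this is exactly the conclusion.

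The step I expect to be the main obstacle is the strengthening of Lemma~\ref{lemma:li2}: it is crucial that the simulation be preserved \emph{along} the matched concrete run and not merely that some matching run exists, for otherwise I could not transport $s \preccurlyeq s_2'$ across the final $\cactiond{act_k}$ step. The remaining ingredients are routine bookkeeping: decomposing $\saction{\cdots}$ into a shorter abstract path plus one $\Longrightarrow_\alpha$ step, the two invocations of Lemma~\ref{lemma:li6}(1) and the induction hypothesis, and composability of concrete $\cactiond{\cdots}$ runs. I would also check that the labels of the produced concrete steps are indeed $act_1,\ldots,act_k$, which holds because $\in_v$ forces the concrete states to share the location of the corresponding symbolic states and transitions of $\sem{A}^{\alpha}$ are labelled by their source location.
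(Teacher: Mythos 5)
Your proof is correct and takes essentially the same route as the paper: induction on $k$, splitting off the last abstract step, applying Lemma~\ref{lemma:li6}(1) to it, the induction hypothesis to the prefix, and then Lemma~\ref{lemma:li2} together with transitivity of $\preccurlyeq$ to stitch the concrete runs together. Your explicitly stated strengthening of Lemma~\ref{lemma:li2} (that the simulation relation is carried over to the matched target states) is precisely what the paper's argument silently relies on, since its written statement of that lemma omits the conclusion $s_2\,R\,s_2'$ that the step requires.
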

\begin{proof}
We prove the lemma by induction on $k$.

\textbf{Basis:} 
By Lemma \ref{lemma:li6}, the lemma is true for $k=1$.

\textbf{Assumption:} 
Assume that lemma holds for $k=n$.

\textbf{Step:} 
Now we prove the lemma for $k=n+1$. $Q_1 \saction{act_1,act_2,\ldots,act_{n+1}}_{\alpha} Q_2$ implies that there exists  a $Q \in \sem{A}^{\alpha}$ such that  $Q_1 \saction{act_1,act_2,\ldots,act_n}_{\alpha} Q$ and $Q \saction{act_{n+1}}_{\alpha} Q_2$.
 By Lemma \ref{lemma:li6} and the fact that $Q \saction{act_{n+1}}_{\alpha} Q_2$ and $s\in_{v} Q_2$, we have $s'$ and $s''$ such that $s' \in_{v} Q$, $s' \cactiond{act_{n+1}} s''$, and $s \preccurlyeq s''$.
 Since $Q_1 \saction{act_1,act_2,\ldots,act_n}_{\alpha} Q$ and $s' \in_{v} Q$,
 by the induction assumption there exist $s_1$ and $s'''$ such that $s_1
 \in_{v} Q_1$, $s_1 \cactiond{act_1,act_2,\ldots,act_n} s'''$ ,and $s'
 \preccurlyeq s'''$.
 Since $ s' \preccurlyeq s''' $ and $ s' \cactiond{act_{n+1}} s''$, by Lemma \ref{lemma:li2} it follows that there is a $s_2$ such that $s''' \cactiond{act_{n+1}} s_2$ and $s'' \preccurlyeq s_2$. From the fact that $\preccurlyeq$ is transitive and $s\preccurlyeq s'' $ and  $ s'' \preccurlyeq s_2$ we have $s \preccurlyeq s_2$.
 
 By $ s_1 \cactiond{act_1,act_2,\ldots,act_n} s''' $ and $ s''' \cactiond{act_{n+1}} s_2 $ we obtain $ s_1 \cactiond{act_1,act_2,\ldots,act_n+1} s_2 $.
\qed
\end{proof}

\begin{lemma}
\label{lemma:li8}
Let $s$ be a~concrete state in $\sem{A}_v$, $Q_1$,$Q_2$ be symbolic states in $\sem{A}^{\alpha}$.
If $Q \sactiona{act_1,act_2,\ldots,act_k} Q$ and $s \in_v Q$, then for any $n \geq 1$, there exist concrete states $s_1,s_2 \ldots s_{n+1}$ in $\sem{A}_v$ such that $s_1 \in_v Q$, $s \preccurlyeq s_{n+1}$, and for each $i \in {1,2,\ldots,n}$, $s_i \cactiond{act_1,act_2,\ldots,act_k} s_{i+1}$.
\end{lemma}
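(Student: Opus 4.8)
The plan is to derive the lemma from a single application of Lemma~\ref{lemma:li7} to the $n$-fold unrolling of the abstract cycle, followed by a purely combinatorial regrouping of the resulting concrete run into $n$ blocks.

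\textbf{Unrolling.} Write $\vec{a}$ for the label sequence $act_1,act_2,\ldots,act_k$. Concatenating the abstract path $Q \sactiona{\vec{a}} Q$ with itself $n$ times gives a path $Q \sactiona{b_1,b_2,\ldots,b_{nk}} Q$ whose label word is the $n$-fold repetition of $act_1 act_2 \cdots act_k$; precisely, $b_{(i-1)k+j} = act_j$ for all $1 \le i \le n$ and $1 \le j \le k$. Since $s \in_v Q$, I would apply Lemma~\ref{lemma:li7} with $Q_1 = Q_2 = Q$ and this length-$nk$ label sequence. Note that Lemma~\ref{lemma:li7} imposes no restriction forcing $Q_1 \neq Q_2$ and allows an arbitrary finite label sequence, so this is a legitimate instance. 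It yields concrete states $\sigma,\sigma'$ in $\sem{A}_v$ with $\sigma \in_v Q$, $\sigma \cactiond{b_1,b_2,\ldots,b_{nk}} \sigma'$, and $s \preccurlyeq \sigma'$.

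\textbf{Regrouping.} By the definition of composite timed steps, the run $\sigma \cactiond{b_1,\ldots,b_{nk}} \sigma'$ passes through concrete states $w_0 = \sigma, w_1, \ldots, w_{nk} = \sigma'$ with $w_{m-1} \cactiond{b_m} w_m$ for $m = 1,\ldots,nk$. Set $s_i := w_{(i-1)k}$ for $i = 1,\ldots,n+1$. For each $i \in \{1,\ldots,n\}$ the $k$ consecutive steps from $w_{(i-1)k}$ to $w_{ik}$ carry the labels $b_{(i-1)k+1},\ldots,b_{ik}$, which equal $act_1,\ldots,act_k$ by the periodicity above, so $s_i \cactiond{act_1,act_2,\ldots,act_k} s_{i+1}$. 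Moreover $s_1 = w_0 = \sigma \in_v Q$, and $s_{n+1} = w_{nk} = \sigma'$, hence $s \preccurlyeq s_{n+1}$. These $s_1,\ldots,s_{n+1}$ are exactly the concrete states required by the lemma.

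No step here is genuinely hard: the content is Lemma~\ref{lemma:li7}, and the rest is index bookkeeping, the mild obstacle being to verify that the $nk$-step concrete run splits cleanly into $n$ copies of $\vec{a}$. If one instead wants to match the inductive style of Lemmas~\ref{lemma:li3} and~\ref{lemma:li7}, the same statement follows by induction on $n$: the base case $n=1$ is Lemma~\ref{lemma:li7} with $Q_1=Q_2=Q$, and in the inductive step one prepends a fresh block by applying Lemma~\ref{lemma:li7} to the state $s_1 \in_v Q$ of the already-built chain and then propagates the simulation forward through that chain, one block at a time, using Lemma~\ref{lemma:li2}, finally obtaining $s \preccurlyeq$ (the new last state) by transitivity of $\preccurlyeq$ (which holds because the relational composition of two time-abstracting simulations is again a time-abstracting simulation); in that variant the only delicate point is keeping the simulation witnesses aligned along the chain. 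Either way, Lemma~\ref{lemma:li8} is the finite-unrolling ingredient that is intended to be combined, via a pigeonhole argument on the finitely many classes of $\equiv_{A\mid v}$ together with Lemma~\ref{lemma:li3}, to extract a genuine infinite concrete accepting run of $\sem{A}_v$.
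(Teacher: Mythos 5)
Your argument is correct, but your primary route differs from the paper's. The paper proves the lemma by induction on $n$: the base case is Lemma~\ref{lemma:li7}; in the step it applies Lemma~\ref{lemma:li7} to $s\in_v Q$ to obtain a fresh block $s'\cactiond{act_1,\ldots,act_k}s''$ with $s\preccurlyeq s''$ and $s'\in_v Q$, invokes the induction hypothesis on $s'$ to get the chain $s_1,\ldots,s_{m+1}$ with $s'\preccurlyeq s_{m+1}$, and then uses Lemma~\ref{lemma:li2} once to append $s_{m+2}$ at the end of the chain, concluding by transitivity of $\preccurlyeq$. You instead concatenate the abstract cycle with itself $n$ times, apply Lemma~\ref{lemma:li7} a single time to the resulting length-$nk$ label word, and regroup the witness run into $n$ blocks of $k$ steps; this is legitimate because Lemma~\ref{lemma:li7} is stated (and proved, by induction on the length of the label sequence) for arbitrary finite label sequences and does not forbid $Q_1=Q_2$, and because the composite-step notation $\cactiond{\,\cdot\,}$ is by definition a chaining of single steps, so the splitting into blocks is immediate. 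What your route buys is brevity: all simulation bookkeeping (Lemma~\ref{lemma:li2}, transitivity of $\preccurlyeq$) is already absorbed inside Lemma~\ref{lemma:li7}, so no new use of it is needed here; what the paper's route buys is uniformity of style with Lemmas~\ref{lemma:li3} and~\ref{lemma:li7} and no need to reason about unrolled label words. Your secondary inductive sketch is essentially the paper's proof (modulo prepending versus appending the fresh block), and your closing remark about how the lemma feeds the pigeonhole argument on $\equiv_{A\mid v}$-classes matches the paper's use in Lemma~\ref{lemma:li9}.
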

\begin{proof}
We prove the lemma by induction on $n$.

\textbf{Basis:} 
By Lemma \ref{lemma:li7}, the lemma is true for $n=1$.

\textbf{Assumption:} 
Assume that lemma holds for $n=m$.

\textbf{Step:} 
Now we prove that the lemma is true for $n=m+1$.
Since $Q \sactiona{act_1,act_2,\ldots,act_k} Q$ and $s \in_{v} Q$, by Lemma \ref{lemma:li7}, there exist $s',s''$ such that $s' \in_{v} Q$, $s' \cactiond{act_1,act_2,\ldots,act_k} s''$, and $ s \preccurlyeq s''$. Applying the induction assumption to $Q \sactiona{act_1,act_2,\ldots,act_k} Q$ and $s' \in_{v} Q$, we know that there exist $s_1,s_2 \ldots s_{m+1}$ such that $s_1 \in_{v} Q$, $s' \preccurlyeq s_{m+1}$, and for each 
$i \in {1,2,\ldots,m}$, $s_i \cactiond{act_1,act_2,\ldots,act_k} s_{i+1}$.

Since $s'\cactiond{act_1,act_2,\ldots,act_k} s''$ and $s' \preccurlyeq s_{m+1}$, by Lemma \ref{lemma:li2}, there exists  $s_{m+2}$ such that $ s_{m+1}\cactiond{act_1,act_2,\ldots,act_k} s_{m+2}$, and $s'' \preccurlyeq s_{m+2}$.

Since $s \preccurlyeq s''$ and $s'' \preccurlyeq s_{m+2}$ we obtain $s \preccurlyeq s_{m+2}$, thus the lemma holds for $n=m+1$.
\qed
\end{proof}

\begin{lemma}
\label{lemma:li9}
Let $s$ be a~concrete state in $\sem{A}_v$, $Q_1$,$Q_2$ be symbolic states in $\sem{A}^{\alpha}$.
If $Q \sactiona{act_1,act_2,\ldots,act_k} Q$ and $s \in_v Q$, then there exist concrete states $s_1,s_2 \ldots s_{m}$ in $\sem{A}_v$ and $i \in \{1,2,\ldots,m-1\}$ such that $s_1 \in_v Q$ , $s_i \equiv_{A\mid v} s_m$, and for each $j \in \{ 1,2,\ldots,m-1\}$, $s_j \cactiond{act_1,act_2,\ldots,act_k}s_{j+1}$.
\end{lemma}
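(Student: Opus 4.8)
The plan is to combine Lemma~\ref{lemma:li8} with a pigeonhole argument over the finitely many regions of the timed automaton $A \mid v$. First I would fix the loop $Q \sactiona{act_1,act_2,\ldots,act_k} Q$ and the concrete state $s \in_v Q$. The region abstraction $\equiv_{A\mid v}$ has only finitely many equivalence classes (this is the classical fact about region automata, used implicitly throughout \cite{li09formal}); let $N$ be the number of regions. Applying Lemma~\ref{lemma:li8} with $n = N$ yields concrete states $s_1, s_2, \ldots, s_{N+1}$ in $\sem{A}_v$ with $s_1 \in_v Q$, $s \preccurlyeq s_{N+1}$, and $s_j \cactiond{act_1,act_2,\ldots,act_k} s_{j+1}$ for each $j \in \{1,\ldots,N\}$.

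Next I would invoke the pigeonhole principle: among the $N+1$ states $s_1, \ldots, s_{N+1}$ there must be two, say $s_i$ and $s_m$ with $i < m$, lying in the same region, i.e.\ $s_i \equiv_{A\mid v} s_m$. Setting $m$ to be the larger index and keeping the prefix $s_1, s_2, \ldots, s_m$, we have exactly the statement required: $s_1 \in_v Q$, $s_i \equiv_{A\mid v} s_m$ for some $i \in \{1,\ldots,m-1\}$, and $s_j \cactiond{act_1,act_2,\ldots,act_k} s_{j+1}$ for each $j \in \{1,\ldots,m-1\}$ (these transitions are inherited from the sequence produced by Lemma~\ref{lemma:li8}). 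Note that the conclusion of this lemma does not mention $s \preccurlyeq s_m$, so the extra information about $s_{N+1}$ is simply discarded; only the existence of the repeated region matters here. (The $\preccurlyeq$-information is presumably needed in whatever lemma consumes Lemma~\ref{lemma:li9} downstream, combined with Lemma~\ref{lemma:li3}.)

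The only real subtlety — and the step I would be most careful about — is making sure the region abstraction applies to states reached along delay-then-action macro-steps $\cactiond{act_1,\ldots,act_k}$ rather than to single transitions: one needs that $\equiv_{A\mid v}$ is preserved by these compound steps, which is where Lemma~\ref{lemma:li1} (time-abstracting bisimulation) enters, but for the pigeonhole argument itself we only need that $\equiv_{A\mid v}$ partitions the concrete state space into finitely many classes, which is immediate from the standard region construction on the bounded timed automaton $A\mid v$. So the proof is short: apply Lemma~\ref{lemma:li8} with $n$ equal to the (finite) number of regions, then pigeonhole. I do not anticipate a genuine obstacle beyond bookkeeping of indices.
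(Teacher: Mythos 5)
Your proof is correct and takes essentially the same route as the paper: apply Lemma~\ref{lemma:li8} with $n$ at least the number of $\equiv_{A\mid v}$-equivalence classes and then pigeonhole the resulting sequence of concrete states. The only (harmless for the statement as written) difference is that the paper pigeonholes only over $s_2,\ldots,s_{n+1}$, thereby guaranteeing $i \geq 2$, which is the slightly stronger form that the proof of Lemma~\ref{lemma:li10} later relies on, since $\cactiond{act_1,\ldots,act_k}^{*}$ there denotes at least one macro-step.
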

\begin{proof}
We know that there are only finitely many $\equiv_{A\mid v}$-equivalence classes. Let $n$ be an integer greater than the number of $\equiv_{A\mid v}$-equivalence classes. By Lemma \ref{lemma:li8}, there exist $s_1,s_2 \ldots s_{n+1}$ such that $s_1 \in_{v} Q$, and for each $j \in \{ 1,2,\ldots,n\}$, $s_j \cactiond{act_1,act_2,\ldots,act_k}s_{j+1}$.

Since the sequence of states $s_2,s_3 \ldots s_{n+1}$ has length $n$, there exist $i,m \in \{ 2,3,\ldots,n+1\}$ such that $i < m$ and $s_i \equiv_{A\mid v} s_m$.
\qed
\end{proof}

\begin{lemma}
\label{lemma:li10}
Let $Q=(l,\sem{C,D})$ be symbolic states in $\sem{A}^{\alpha}$ such that $v \in C$.
If $Q \sactiona{act_1,act_2,\ldots,act_k} Q$, then there exist concrete states $s',s'',s'''$ in $\sem{A}_v$ such that $s' \in_v Q$, $s' \cactiond{act_1,act_2,\ldots,\act_k}^{*} s''$, $s'' \cactiond{act_1,act_2,\ldots,act_k}^{*} s'''$ and $s'' \equiv_{A\mid v } s'''$.
\end{lemma}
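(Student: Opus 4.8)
The plan is to obtain Lemma~\ref{lemma:li10} as an essentially immediate corollary of Lemma~\ref{lemma:li9}. Since $Q=(l,\sem{C,D})$ is a symbolic state of $\sem{A}^{\alpha}$ and $v\in C$, the set $\sem{C,D}$ contains a concrete state under $v$ (the symbolic construction keeps its CPDBMs satisfiable for the valuations in their constraint sets), so the hypotheses of Lemma~\ref{lemma:li9} are met by the cyclic word $Q\sactiona{act_1,act_2,\ldots,act_k}Q$. Applying Lemma~\ref{lemma:li9} yields concrete states $s_1,s_2,\ldots,s_m$ in $\sem{A}_v$ and an index $i$ with $s_1\in_v Q$, $s_i\equiv_{A\mid v}s_m$, and $s_j\cactiond{act_1,act_2,\ldots,act_k}s_{j+1}$ for every $j\in\{1,\ldots,m-1\}$. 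The pigeonhole count in the proof of Lemma~\ref{lemma:li9} in fact places the repeated region strictly inside the chain, so one may take $1<i<m$; should only $i=1$ be available, one first extends the chain by one further round, using that $\equiv_{A\mid v}$ is a time-abstracting bisimulation (Lemma~\ref{lemma:li1}) together with Lemma~\ref{lemma:li2}, appending states $t_2,\ldots,t_m$ with $s_m\cactiond{act_1,\ldots,act_k}t_2\cactiond{act_1,\ldots,act_k}\cdots$, $t_j\equiv_{A\mid v}s_j$, and in particular $t_m\equiv_{A\mid v}s_m$; in the lengthened chain the repeated region then sits at an interior position.

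With $1<i<m$ in hand, set $s':=s_1$, $s'':=s_i$ and $s''':=s_m$. Then $s'\in_v Q$ holds by the choice of $s_1$. The segment $s_1\cactiond{act_1,\ldots,act_k}s_2\cactiond{act_1,\ldots,act_k}\cdots\cactiond{act_1,\ldots,act_k}s_i$ has $i-1\ge1$ steps and hence witnesses $s'\cactiond{act_1,act_2,\ldots,act_k}^{*}s''$; symmetrically $s_i\cactiond{act_1,\ldots,act_k}\cdots\cactiond{act_1,\ldots,act_k}s_m$ has $m-i\ge1$ steps and witnesses $s''\cactiond{act_1,act_2,\ldots,act_k}^{*}s'''$. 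Finally $s''=s_i\equiv_{A\mid v}s_m=s'''$, so all four requirements hold.

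The real content of this chain of lemmas lies upstream, in Lemmas~\ref{lemma:li7}--\ref{lemma:li9}, which transfer runs from $\sem{A}^{\alpha}$ back into $\sem{A}_v$ along cyclic symbolic words via the simulation relation (Lemma~\ref{lemma:li6}) and then exploit finiteness of the region quotient $\equiv_{A\mid v}$ to force a repeated region; relative to those, Lemma~\ref{lemma:li10} is a packaging step. The one point I would flag explicitly is the mismatch between the (progress-requiring) definition of $\cactiond{\cdot}^{*}$ and the possibility that the repeated region lands at the very start of the chain ($i=1$): this is exactly why the index must be pinned to the interior, either directly from the counting argument in Lemma~\ref{lemma:li9} or, failing that, by the single extra bisimulation round described above. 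The only genuine preliminary is the satisfiability remark that lets Lemma~\ref{lemma:li9} be invoked from just the assumption $v\in C$.
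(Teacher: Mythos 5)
Your proposal is correct and takes essentially the same route as the paper: obtain a concrete state $s\in_v Q$, invoke Lemma~\ref{lemma:li9} on $Q \sactiona{act_1,act_2,\ldots,act_k} Q$, and set $s'=s_1$, $s''=s_i$, $s'''=s_m$. The corner case $i=1$ that you guard against is handled in the paper implicitly, since the pigeonhole step in the proof of Lemma~\ref{lemma:li9} finds the repetition among $s_2,\ldots,s_{n+1}$, so the index satisfies $i\ge 2$ exactly as the paper's proof of this lemma uses it.
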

\begin{proof}
Since $Q \sactiona{act_1,act_2,\ldots,act_k} Q$, by definition, there exist $s \in_{v} Q$. By Lemma \ref{lemma:li9}, there exist $s_1,s_2,s_3,\ldots,s_m$ and $ i \in \{ 2,3,\ldots, m-1\}$ such that $s_1 \in_{v} Q$, $s_i \equiv_{A\mid v } s_m$, and for each $j \in \{ 1,2,\ldots,m-1\}$, $s_j \cactiond{act_1,act_2,\ldots,act_k}s_{j+1}$.

Let $s'=s_1$,$s''=s_i$, and $s'''=s_m$, then $s' \in_{v} Q$, $s' \cactiond{act_1,act_2,\ldots,act_k}^{*} s''$, $s'' \cactiond{act_1,act_2,\ldots,act_k}^{*} s'''$, and $s'' \equiv_{A\mid v } s'''$.
\qed
\end{proof}

\begin{lemma}
\label{lemma:li11}
Let $s_1,s_2$ be concrete states in $\sem{A}_v$.
If $s_1 \cactiond{act_1,act_2,\ldots,act_k}^{*} s_2$ and $s_1 \equiv_{A\mid v} s_2$, then there is an infinite sequence of concrete states $s_1s_2\ldots$ in $\sem{A}_v$ such that for each $i \geq 1$, $s_i \cactiond{act_1,act_2,\ldots,act_k} s_{i+1}$.
\end{lemma}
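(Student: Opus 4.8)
The plan is to reduce Lemma~\ref{lemma:li11} to its single-block counterpart Lemma~\ref{lemma:li3}, by treating the starred relation as one long block of action--delay steps.

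First I would unfold the hypothesis. By the definition of $\cactiond{act_1,\ldots,act_k}^{*}$, there is some $m \geq 1$ and concrete states $s_1 = t_0, t_1, \ldots, t_m = s_2$ in $\sem{A}_v$ with $t_{j-1} \cactiond{act_1,\ldots,act_k} t_j$ for every $j \in \{1,\ldots,m\}$. Concatenating these $m$ blocks and writing $\tau$ for the finite label word obtained by repeating $act_1,\ldots,act_k$ exactly $m$ times, we obtain a single block $s_1 \cactiond{\tau} s_2$. The key observation is that Lemma~\ref{lemma:li3} is insensitive to the length of its label word: its proof uses only that $\equiv_{A\mid v}$ is a time-abstracting bisimulation (Lemma~\ref{lemma:li1}) together with the path-mimicking Lemma~\ref{lemma:li2}, both of which hold for arbitrary finite label words. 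Hence Lemma~\ref{lemma:li3} applies verbatim with $\tau$ in place of $act_1,\ldots,act_k$.

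Since $s_1 \equiv_{A\mid v} s_2$, applying Lemma~\ref{lemma:li3} with the word $\tau$ yields an infinite sequence of concrete states $u_1 = s_1, u_2 = s_2, u_3, \ldots$ in $\sem{A}_v$ with $u_i \cactiond{\tau} u_{i+1}$ for all $i \geq 1$. It then remains to split each $\tau$-step back into $m$ single blocks: for each $i$, by definition of the combined-step notation there are states $u_i = w^{(i)}_0, w^{(i)}_1, \ldots, w^{(i)}_m = u_{i+1}$ with $w^{(i)}_{j-1} \cactiond{act_1,\ldots,act_k} w^{(i)}_j$ for each $j$. Because $w^{(i)}_m = u_{i+1} = w^{(i+1)}_0$, these finite paths glue into a single infinite sequence that starts at $s_1$, passes through $s_2 = u_2$, and in which every two consecutive states are related by $\cactiond{act_1,\ldots,act_k}$; this is the sequence required by the statement.

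I do not expect a genuine obstacle here -- the argument is essentially bookkeeping over the unfolding and regluing of blocks. The one point that has to be checked with care is that Lemma~\ref{lemma:li3} may legitimately be instantiated with the repeated word $\tau$ rather than with a fresh word of length $k$; as noted above this is immediate from its proof. If one prefers not to re-instantiate Lemma~\ref{lemma:li3}, an equivalent route is to run the mimicking construction from its proof directly on the whole $m$-block path from $s_1$ to $s_2$, obtaining an infinite chain of $\equiv_{A\mid v}$-related checkpoints joined by $m$-block paths, and then to concatenate those paths.
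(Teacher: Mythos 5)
Your proposal is correct and follows essentially the same route as the paper, whose entire proof is the one-line reduction ``follows from Lemma~\ref{lemma:li3}''; you merely make explicit the bookkeeping the paper leaves implicit, namely unfolding the starred hypothesis into $m\geq 1$ blocks, instantiating Lemma~\ref{lemma:li3} with the concatenated word (legitimate, since that lemma and Lemmas~\ref{lemma:li1} and~\ref{lemma:li2} work for arbitrary finite label words), and regluing the resulting steps into single $\cactiond{act_1,\ldots,act_k}$ blocks.
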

\begin{proof}
Follows from  Lemma \ref{lemma:li3}.
\qed
\end{proof}

\begin{lemma}
\label{lemma:li12}
Let $Q=(l,\sem{C,D})$ be a~symbolic state in $\sem{A}^{\alpha}$ such that $v \in C$.
If $Q \sactiona{act_1,act_2,\ldots,act_k} Q$, then there is an infinite sequence of concrete states $s_1s_2\ldots$ in $\sem{A}_v$ such that $s_1 \in_v Q$, and for each $i \geq 1$, $s_i \cactiond{act_1,act_2,\ldots,act_k}^{*} s_{i+1}$.
\end{lemma}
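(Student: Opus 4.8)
The plan is to derive Lemma~\ref{lemma:li12} directly from Lemma~\ref{lemma:li10} and Lemma~\ref{lemma:li11} by pasting a finite prefix onto an infinite tail. Lemma~\ref{lemma:li10} already extracts from the symbolic self-loop a concrete run segment that returns, up to region equivalence, to an earlier configuration; Lemma~\ref{lemma:li11} turns such a ``region-closing'' segment into an infinite run; and a small amount of bookkeeping glues the two together.

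Concretely, first I would apply Lemma~\ref{lemma:li10} to $Q \sactiona{act_1,act_2,\ldots,act_k} Q$, which is legitimate since $v \in C$. This yields concrete states $s',s'',s'''$ of $\sem{A}_v$ with $s' \in_v Q$, $s' \cactiond{act_1,act_2,\ldots,act_k}^{*} s''$, $s'' \cactiond{act_1,act_2,\ldots,act_k}^{*} s'''$, and $s'' \equiv_{A\mid v} s'''$. Next I would feed the pair $s'',s'''$ into Lemma~\ref{lemma:li11}: from $s'' \cactiond{act_1,act_2,\ldots,act_k}^{*} s'''$ together with $s'' \equiv_{A\mid v} s'''$ we obtain an infinite sequence $t_1 t_2 t_3 \ldots$ of concrete states with $t_1 = s''$ and $t_i \cactiond{act_1,act_2,\ldots,act_k} t_{i+1}$ for every $i \geq 1$. (This is the genuine pumping step, and it ultimately rests on the time-abstracting bisimulation Lemma~\ref{lemma:li1} and on Lemma~\ref{lemma:li3}, both already available.)

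Finally I would set $s_1 := s'$ and $s_{i+1} := t_i$ for $i \geq 1$, and check the three requirements of the statement: $s_1 = s' \in_v Q$ by Lemma~\ref{lemma:li10}; $s_1 \cactiond{act_1,act_2,\ldots,act_k}^{*} s_2$ is exactly the prefix $s' \cactiond{act_1,act_2,\ldots,act_k}^{*} s''$; and for each $i \geq 2$, $s_i \cactiond{act_1,act_2,\ldots,act_k}^{*} s_{i+1}$ holds because $t_{i-1} \cactiond{act_1,act_2,\ldots,act_k} t_i$ is a single round, which by definition satisfies the starred relation. This produces the desired infinite concrete run and finishes the proof.

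I do not expect a real obstacle here: the only care needed is the indexing when concatenating the finite prefix from Lemma~\ref{lemma:li10} with the infinite tail from Lemma~\ref{lemma:li11}, and the observation that one $\cactiond{act_1,act_2,\ldots,act_k}$ step is a special case of $\cactiond{act_1,act_2,\ldots,act_k}^{*}$. All the substantive work — the $\preccurlyeq$-chasing through the abstraction, finiteness of the region quotient, and the bisimulation argument — has already been carried out in Lemmas~\ref{lemma:li1}--\ref{lemma:li11}, so Lemma~\ref{lemma:li12} is essentially their corollary.
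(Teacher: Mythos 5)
Your proof is correct and is essentially the paper's own argument: the paper likewise applies Lemma~\ref{lemma:li10} to the symbolic self-loop to obtain $s_1 \in_v Q$ together with a region-equivalent pair $s_2 \equiv_{A\mid v} s_3$ on the loop, and then invokes Lemma~\ref{lemma:li11} to pump that pair into an infinite tail, concatenating it with the finite prefix. The only difference is that you spell out the re-indexing and the observation that a single $\cactiond{act_1,\ldots,act_k}$ round is a special case of the starred relation, which the paper leaves implicit.
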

\begin{proof}
By Lemma \ref{lemma:li10}, there exist $s_1,s_2,s_3$ such that $s_1 \in_{v} Q$, $s_1 \cactiond{act_1,act_2,\ldots,act_k}^{*} s_2$, $s_2 \cactiond{act_1,act_2,\ldots,act_k}^{*} s_{3}$ and $s_2 \equiv_{A\mid v} s_3$.

By Lemma \ref{lemma:li11}, there is an infinite sequence of states $s_2,s_3,s_4,\ldots$ such that for each $i \geq 2$, $s_i \cactiond{act_1,act_2,\ldots,act_k}^{*} s_{i+1}$.%
\qed
\end{proof}

\begin{theorem}
\label{theorem:lassoSimulation}
Let $A=((L, l_0, X, P, \Delta, \Inv), F)$ be a~PTBA and $\alpha$ be a~finite abstraction. For each parameter valuation $v$ the following holds: there exists an accepting run of
$\sem{A}_v$ if and only if there exists an accepting run respecting $v$ of $\sem{A}^{\alpha}$.
\end{theorem}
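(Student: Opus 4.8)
The plan is to prove the two implications separately, and in both directions to exploit the two-sided relationship between the symbolic semantics $\sem{A}$, its abstraction $\sem{A}^{\alpha}$, and the concrete semantics $\sem{A}_v$ that is packaged in Lemmas~\ref{lemma:li5}--\ref{lemma:li6}. For the ``only if'' direction I would start from a concrete accepting run $\pi$ of $\sem{A}_v$, split it into blocks delimited by the $\stackrel{act}{\longrightarrow}_d$ relation, and push it forward through the symbolic semantics using part~(2) of Lemma~\ref{lemma:li5}, obtaining a symbolic run in $\sem{A}$; then close it up under $\alpha$ using part~(2) of Lemma~\ref{lemma:li6} to get a run in $\sem{A}^{\alpha}$ that contains the concrete states of $\pi$ and respects $v$. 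Finiteness of $\sem{A}^{\alpha}$ (Lemma~\ref{lemma:extraFiniteness}, via the abstraction being $\alpha_{pk}$) means the resulting infinite run visits some symbolic state $Q$ infinitely often; since the original run is accepting and non-Zeno, infinitely many of those visits pass through an accepting concrete state, and those fall into finitely many symbolic states, so some accepting $Q$ is visited infinitely often, giving an accepting lasso in $\sem{A}^{\alpha}$ respecting $v$.

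For the ``if'' direction — which is the substantive half and where essentially all of the lemma machinery is spent — I would take an accepting lasso of $\sem{A}^{\alpha}$ respecting $v$, i.e.\ a symbolic cycle $Q \sactiona{act_1,\ldots,act_k} Q$ through an accepting $Q=(l,\sem{C,D})$ with $v \in C$, preceded by a finite reachable prefix. The heart of the argument is Lemma~\ref{lemma:li12}: it produces an infinite concrete sequence $s_1 s_2 \ldots$ in $\sem{A}_v$ with $s_1 \in_v Q$ and $s_i \cactiond{act_1,\ldots,act_k}^{*} s_{i+1}$ for all $i$. Concatenating these blocks yields an infinite concrete run; because every block faithfully traverses the action-edge sequence of the symbolic cycle and $Q$ is accepting with $l \in F$, the concrete run passes through an accepting concrete state infinitely often, so it is accepting. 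The finite prefix of the lasso is handled by applying part~(1) of Lemma~\ref{lemma:li6} (or its iterated form, Lemma~\ref{lemma:li7}) backwards from $s_1 \in_v Q$ along the prefix to reach a concrete initial state, using transitivity of $\preccurlyeq$ and Lemma~\ref{lemma:li2} to transport the continuation across the simulation; here I would also invoke the standing assumption that the PTBA is in strongly non-Zeno form so that the constructed run is non-Zeno and hence a legitimate accepting run.

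The main obstacle — and the reason the chain of Lemmas~\ref{lemma:li7}--\ref{lemma:li12} exists — is that $\alpha_{pk}$ is only a \emph{simulation}-style abstraction, not a bisimulation: a single symbolic edge in $\sem{A}^{\alpha}$ is matched concretely only up to $\preccurlyeq$, so one cannot directly lift the symbolic cycle to a concrete cycle. The standard fix (adapted from~\cite{li09formal}) is to unfold the symbolic cycle many times, use the pigeonhole principle on the finitely many region equivalence classes $\equiv_{A\mid v}$ to find two concrete states on the unfolded path that are region-equivalent, and then use the fact that $\equiv_{A\mid v}$ is a time-abstracting \emph{bisimulation} (Lemma~\ref{lemma:li1}) to pump that finite concrete segment into a genuine infinite concrete run (Lemmas~\ref{lemma:li3}, \ref{lemma:li11}). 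So in writing the proof I would largely just assemble Lemma~\ref{lemma:li12} for the cycle, Lemma~\ref{lemma:li7}/\ref{lemma:li6} for the prefix, and the forward lemmas \ref{lemma:li5}(2)--\ref{lemma:li6}(2) plus finiteness for the converse, leaving the delicate pumping argument encapsulated in the already-stated lemmas.
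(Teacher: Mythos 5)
Your plan is correct and follows essentially the same route as the paper's proof: the forward direction by induction using Lemma~\ref{lemma:li6}(2), and the backward direction by extracting an accepting lasso respecting $v$, pumping the cycle with Lemma~\ref{lemma:li12}, handling the prefix with Lemma~\ref{lemma:li7}, and stitching via transitivity of $\preccurlyeq$ and Lemma~\ref{lemma:li2}. The only deviations are harmless extras (the pigeonhole step in the forward direction and the appeal to non-Zenoness are not needed, since acceptance transfers directly through the location component of $\in_v$).
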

\begin{proof}
The fact that the existence of an accepting run of $\sem{A}_v$ implies the existence of an~accepting run respecting $v$ of $\sem{A}^{\alpha}$ can be proved easily for each valuation $v$ by induction and Lemma~\ref{lemma:li6}.

Now we give the proof for the other direction. If $\sem{A}^{\alpha}=$
$(\mathbb Q_A, \mathbb Q_{init},$ $\Longrightarrow^{\alpha})$ over $L$ has an
accepting run respecting $v$, then there exists a $Q=(l,\sem{C,D}) \in \mathbb Q_A$ and
$act_0,act_1,\ldots,act_i,\ldots,act_k \in L$ such that $Q_0
\sactiona{act_0,act_1,\ldots,\act_{i-1}} Q$, and $Q
\sactiona{act_i,act_{i+1},\ldots,act_k} Q$, and $F \cap \{act_i,$
$act_{i+1},\ldots,$ $act_k\} \neq \emptyset$ where $Q_0$ is the initial state
of $\sem{A}^{\alpha}$ and $v \in C$.

Applying Lemma~\ref{lemma:li12} to $Q_i \sactiona{act_i,act_{i+1},\ldots,act_k} Q_{i}$ we have  an infinite sequence of states $s_2's_3's_4'\ldots$ such that $s_2' \in_v Q_i$ and for each $j \geq 2$ \\ $s_j' \cactiond{act_i,act_{i+1} \ldots, act_k}^{*} s_{j+1}'$.

Applying  Lemma~\ref{lemma:li7} to $Q_0 \sactiona{act_0,act_1,\ldots,\act_i-1} Q_i$ and $s_2' \in_v Q$, it follows that there exist $s',s''$ such that $s' \in_v  Q_0$, $s' \cactiond{act_0,act_1,\ldots,act_{i-1}} s''$, and $s_2' \preccurlyeq s''$.

By $s' \in_v Q_0$ and $Q_0 \in \alpha(S_0)$, we know that there exists a $s_1 \in_v S_0$ such that $s' \preccurlyeq s_1$.
From the fact that $s' \cactiond{act_0,act_1,\ldots,act_{i-1}}s''$, and $s' \preccurlyeq s_1$, we know that there exists a $s_2$ such that $s_1 \cactiond{act_0,act_1,\ldots,act_{i-1}} s_2$, and $s'' \preccurlyeq s_2$. Thus we have obtained that $s_2' \preccurlyeq s_2$.

Applying  Lemma~\ref{lemma:li2} to $s_2' \preccurlyeq s_2$ and $s'_j \cactiond{act_i,act_{i+1}\ldots,act_k}^{*} s'_{j+1} (j = 2,3,\ldots)$, we can obtain an infinite sequence of states $s_3s_4\ldots$ such that $s_2 \cactiond{act_i,act_{i+1}\ldots,act_k}^{*}s_3 \cactiond{act_i,act_{i+1}\ldots,act_k}^{*} s_4\ldots $.

Furthermore, from the fact that $s_1 \in_v S_0$ it follows that  there is a~$d \in \mathbb{R}^{\geq 0}$ such that $s_0 \xrightarrow{d} s_1$ where $s_0$ is the initial state of $\sem{A}_v$.

Thus, we have proved that there exists an infinite sequence of states $s_1,s_2,\ldots$ such that $s_0 \delayd 
s_1 \cactiond{act_0,act_1,\ldots,act_{i-1}} 
s_2 \cactiond{act_i,act_{i+1},\ldots,act_{k}}^{*} 
s_3 \cactiond{act_i,act_{i+1},\ldots,act_{k}}^{*} 
s_4\ldots$. 
Now, by the fact that $F \cap \{act_i,act_{i+1},\ldots,act_k\} \neq \emptyset$, 
we know that $\sem{A}_v$ has an infinite accepting run.
\qed
\end{proof}

Finally, we provide the proof of Theorem~\ref{theorem:extraLassoSimulation}.

\begin{reftheorem}[Theorem~\ref{theorem:extraLassoSimulation}]
Let $A$ be a PTBA. The pk-extrapolation is a~finite abstraction that preserves all accepting runs of $\sem{A}_v$ for each parameter valuation $v$.
\end{reftheorem}

\begin{proof}
Follows directly from Lemma~\ref{lemma:extraFiniteness} and Theorem~\ref{theorem:lassoSimulation}.
\end{proof}

\section{Proof of Theorem~\ref{theorem:CumulativeNDFSCorrectness}}
\label{app:cumulative}

\begin{lemma}
\label{lemma:setAccepted}
If the valuation $v$ is added to the set $\mathit{Found}$ then $v$ is returned
by the algorithm in the set  $Accepted$.
\end{lemma}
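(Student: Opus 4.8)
The plan is to examine the only place where the set $\mathit{Found}$ is modified, namely line~17 inside $\mathit{InnerDFS}$, and to argue that whatever is added there is never removed and is ultimately returned. First I would observe that $\mathit{Found}$ is only ever enlarged: the algorithm contains no statement of the form $\mathit{Found} \gets \mathit{Found} \setminus \cdots$, and every assignment to $\mathit{Found}$ is of the form $\mathit{Found} \gets \mathit{Found} \cup s'.\sem{C}$. Hence $\mathit{Found}$ is monotonically non-decreasing (with respect to $\subseteq$) throughout the entire execution of $\mathit{CumulativeNDFS}$. In particular, if a valuation $v$ is added to $\mathit{Found}$ at some point during the call $\mathit{OuterDFS}(s_{\mathit{init}})$, then $v \in \mathit{Found}$ holds at every later point, and in particular when that call returns.

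Next I would note that the algorithm is a (terminating) nested depth-first search over the finite transition system $G = \sem{A}^{\alpha}$: $\mathit{OuterDFS}$ and $\mathit{InnerDFS}$ each add states to the monotone sets $\mathit{Outer}$ and $\mathit{Inner}$ respectively before recursing, and only recurse into states not yet in those sets, so each procedure is invoked at most once per state and the whole computation halts. Therefore the top-level call $\mathit{OuterDFS}(s_{\mathit{init}})$ on line~2 eventually returns, and control reaches line~3, which sets $\mathit{Accepted} \gets \mathit{Found}$ and returns it. Combining this with the monotonicity observation: if $v$ was placed in $\mathit{Found}$ at any moment, it is still in $\mathit{Found}$ when line~3 executes, hence $v \in \mathit{Accepted}$, which is exactly the returned set.

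The only subtlety I would be careful about is the reference to $\mathit{Accepted}$ inside $\mathit{OuterDFS}$ at line~8 (the test $s.\sem{C} \not\subseteq \mathit{Found}$ and the guard mentioning $\mathit{Accepted}$), which might suggest $\mathit{Accepted}$ is read before line~3 assigns it; I would point out that this does not affect the statement, since the lemma concerns only the final returned value and the relevant invariant is about $\mathit{Found}$, not $\mathit{Accepted}$. So the argument is genuinely just: $\mathit{Found}$ only grows, the algorithm terminates, and the last line copies $\mathit{Found}$ into the output. There is no real obstacle here; the lemma is essentially a bookkeeping observation that isolates the ``soundness-is-preserved'' half needed for the eventual proof of Theorem~\ref{theorem:CumulativeNDFSCorrectness}.
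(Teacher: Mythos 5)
Your argument is correct and matches the paper's own proof, which likewise rests on the two observations that $\mathit{Found}$ is never decreased and is assigned to $\mathit{Accepted}$ at the end of the computation. The extra termination discussion you include is harmless (the paper defers it to a separate lemma), but the essential reasoning is the same.
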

\begin{proof}
This follows from the fact that the set \textit{Found} is never decreased and at the end of computation it is assigned to \textit{Accepted}.
\end{proof}

\begin{lemma}
\label{lemma:extendedPostorder}
Let $A$ be a PTBA and  $q$  be a state in $\sem{A}$ that does not appear on any cycle under $v$. 
The \textit{OuterDFS} procedure will backtrack from q only after every reachable state $s$ such that $v \in  s.\sem{C}$ is already backtracked or $s.\sem{C} \subseteq \mathit{Found}$.
\end{lemma}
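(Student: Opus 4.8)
The plan is to argue directly from the bookkeeping sets $\mathit{Stack}$, $\mathit{Outer}$, $\mathit{Found}$ maintained by \textit{OuterDFS}, using only the monotonicity Lemma~\ref{lemma:mono} and its corollary that all states lying on a single cycle share the same set $\cdot.\sem{C}$; no induction on the order in which the recursive calls return will be needed. The proof will rest on three elementary invariants of the algorithm: at every instant each state is exactly one of \emph{unvisited} ($\notin \mathit{Outer}$), \emph{active} ($\in \mathit{Stack}$), or \emph{backtracked} ($\in \mathit{Outer}\setminus\mathit{Stack}$), it passes through these statuses in that order, and it enters \textit{OuterDFS} at most once; both ``backtracked'' and ``$\cdot.\sem{C}\subseteq\mathit{Found}$'' are monotone in time; and by the time a call \textit{OuterDFS}$(r)$ reaches line~11 it has evaluated the guard on line~7 for every successor of $r$.

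First I would fix the instant at which \textit{OuterDFS}$(q)$ reaches line~11 and take an arbitrary state $s$ reachable from $q$ with $v \in s.\sem{C}$; the claim is that at this instant $s$ is backtracked or $s.\sem{C}\subseteq\mathit{Found}$. The first step is to exclude that $s$ is active. If $s = q$ it is the very state being backtracked, and there is nothing to prove; if $s \ne q$ and $s \in \mathit{Stack}$, then $s$ lies strictly below $q$ on the recursion path, so the stack segment from $s$ down to $q$, followed by any path from $q$ to $s$, is a closed walk through $q$ on which --- by the corollary of Lemma~\ref{lemma:mono} --- every state carries $\cdot.\sem{C} = s.\sem{C}\ni v$; this puts $q$ on a cycle under $v$, contradicting the hypothesis. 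Hence $s$ is either backtracked, in which case we are done, or unvisited, the only remaining case.

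For the unvisited case I would pick any path $q = r_0 \rightarrow r_1 \rightarrow \cdots \rightarrow r_n = s$ and let $k < n$ be the largest index with $r_k$ visited at the fixed instant; then $r_{k+1}$ is unvisited at that instant, hence unvisited at all earlier instants as well. Running the same cycle argument for $r_k$ (which is reachable from $q$, and from which $s$, and therefore $v \in s.\sem{C}\subseteq r_k.\sem{C}$, are reachable) shows $r_k$ is not a strict stack-ancestor of $q$, so $r_k$ is either backtracked or equal to $q$; in either case \textit{OuterDFS}$(r_k)$ has, by the fixed instant, evaluated the line~7 guard on $r_{k+1}$. At that evaluation $r_{k+1}$ was still unvisited, so the first two conjuncts of the guard held, while the guard as a whole must have failed --- otherwise \textit{OuterDFS}$(r_{k+1})$ would have been invoked, contradicting that $r_{k+1}$ is unvisited. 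Therefore it failed on its last conjunct, i.e. $r_{k+1}.\sem{C}\subseteq\mathit{Found}$ already then, hence also at the fixed instant by monotonicity of $\mathit{Found}$; and Lemma~\ref{lemma:mono} gives $s.\sem{C}\subseteq r_{k+1}.\sem{C}\subseteq\mathit{Found}$, as needed.

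I expect the only delicate part to be the temporal bookkeeping of this last step: one must check that \textit{OuterDFS}$(r_k)$'s evaluation of the guard on $r_{k+1}$ happens no later than the backtracking from $q$, and that at that earlier moment $r_{k+1}$ is neither in $\mathit{Outer}$ nor in $\mathit{Stack}$, and one must treat the degenerate cases $s = q$ and $r_k = q$ uniformly. Everything else follows immediately from Lemma~\ref{lemma:mono}, its cycle corollary, and the plain structure of the depth-first recursion; in particular, the hypothesis that $q$ lies on no cycle under $v$ is used only to rule out that $s$ or $r_k$ is active.
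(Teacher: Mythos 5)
Your proof is correct, and it establishes exactly the statement of Lemma~\ref{lemma:extendedPostorder}, but it takes a somewhat different route than the paper. The paper's proof fixes the backtracking instant and splits on a property of the paths from $q$ to $s$: either every such path contains a state $s'$ with $s'.\sem{C} \subseteq \mathit{Found}$ (then Lemma~\ref{lemma:mono} gives $s.\sem{C} \subseteq \mathit{Found}$), or some path is $\mathit{Found}$-free, in which case it asserts in one sentence that \textit{OuterDFS} must have visited $s$ while $q$ was on the stack and hence backtracks from $s$ first; it leaves implicit both the induction along that path (why the line-7 guard could not have blocked the visit earlier, which needs monotonicity of $\mathit{Found}$) and the fact that $s$ cannot itself be an active stack ancestor of $q$ at that moment, which is precisely where the no-cycle-under-$v$ hypothesis is needed. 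You instead case-split on the status of $s$ (active, backtracked, unvisited) at the fixed instant, eliminate the active case explicitly by the closed-walk argument through $q$ and the corollary of Lemma~\ref{lemma:mono}, and in the unvisited case replace the path induction by a frontier-edge argument: the last visited state $r_k$ on a chosen path cannot be a strict stack ancestor (same cycle argument, using $v \in s.\sem{C} \subseteq r_k.\sem{C}$), so its loop already evaluated the guard on $r_{k+1}$, the first two conjuncts held, hence $r_{k+1}.\sem{C} \subseteq \mathit{Found}$ already then and, by monotonicity of $\mathit{Found}$ and Lemma~\ref{lemma:mono}, $s.\sem{C} \subseteq \mathit{Found}$ at backtrack time. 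The two arguments rest on the same ingredients, but yours makes explicit the temporal bookkeeping and the exact use of the hypothesis that $q$ lies on no cycle under $v$, both of which the paper's terser proof glosses over; the treatment of the degenerate case $s = q$ is equally loose in both and is harmless for the way the lemma is used in Lemma~\ref{lemma:partialCorrectness}.
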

\begin{proof}
Consider an~arbitrary state $s$ such that $s$ is reachable from $q$. 
At the time of backtracking from $q$ there are two cases:
\begin{itemize}
\item
Every path from $q$ to the state $s$ contains a~state $s'$ such that $  s'.\sem{C} \subseteq  \mathit{Found}$. 
The fact that $s$ is reachable from $s'$ implies $ s.\sem{C} \subseteq s'.\sem{C}$ (using Lemma~\ref{lemma:mono}).
Hence,  $ s.\sem{C} \subseteq  \mathit{Found}$.
\item There exists a~path from $q$ to the state $s$ such that for every state $s'$ on that path it holds that $s'.\sem{C} \not\subseteq  \mathit{Found}$. 
In this case, the \textit{OuterDFS} procedure has visited state $s$ with state $q$ on the stack.
 Hence, the \textit{OuterDFS} procedure backtracks from the state $q$ after backtracking from $s$.
\end{itemize}
\end{proof}

\begin{lemma} 
\label{lemma:partialCorrectness}
For every parameter valuation $v$, the \textit{Cumulative NDFS} algorithm
returns the set \textit{Accepted} containing the valuation $v$ if and only if
the given graph contains an~accepting cycle $c$ under the valuation $v$. 
\end{lemma}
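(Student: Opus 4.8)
The plan is to prove the two implications separately; the main tools will be monotonicity (Lemma~\ref{lemma:mono}) and its corollary that all states lying on a common cycle carry the same set $\sem{C}$, together with Lemma~\ref{lemma:setAccepted} and Lemma~\ref{lemma:extendedPostorder}. For soundness ($\Rightarrow$), suppose $v$ is in the returned set $\mathit{Accepted}$. Since $\mathit{Found}$ starts empty, never shrinks, and is assigned to $\mathit{Accepted}$ at the end, $v$ must enter $\mathit{Found}$ at some execution of $\mathit{Found} \gets \mathit{Found} \cup s'.\sem{C}$ inside \textit{InnerDFS}; I fix the first such event and the state $s'$, so $v \in s'.\sem{C}$. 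That line is reached only within a nested search started by $\textit{OuterDFS}(t)$ calling $\textit{InnerDFS}(t)$ with $t$ accepting; at that moment the outer stack equals the tree path from $s_{init}$ to $t$, and it stays unchanged throughout the whole nested \textit{InnerDFS} computation. The recursion chain of \textit{InnerDFS} down to the detecting call gives a graph path $t \to^{*} s'$, and $s' \in \mathit{Stack}$ means $s'$ sits on the path $s_{init} \to^{*} t$, hence $s' \to^{*} t$. Composing yields a cycle $s' \to^{*} t \to^{*} s'$ through the accepting state $t$; on it every state reaches every other, so by Lemma~\ref{lemma:mono} all of them share the same set $\sem{C}$, and since $v \in s'.\sem{C}$ this common set contains $v$. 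Thus there is an accepting cycle under $v$.

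For completeness ($\Leftarrow$), let $c$ be an accepting cycle under $v$ and assume, for contradiction, $v \notin \mathit{Accepted}$; by monotonicity of $\mathit{Found}$ and Lemma~\ref{lemma:setAccepted} this means $v \notin \mathit{Found}$ at every point of the run. Let $R_c$ be the set of states from which some state of $c$ is reachable. For $x \in R_c$, picking $w$ on $c$ with $x \to^{*} w$, Lemma~\ref{lemma:mono} gives $w.\sem{C} \subseteq x.\sem{C}$, and $v \in w.\sem{C}$ because $c$ is a cycle under $v$; hence $v \in x.\sem{C}$ for every $x \in R_c$. Consequently $x.\sem{C} \subseteq \mathit{Found}$ would force $v \in \mathit{Found}$, which is excluded, so none of the tests $s.\sem{C} \subseteq \mathit{Found}$, $s'.\sem{C} \subseteq \mathit{Found}$ in Algorithm~1 ever fires on a state of $R_c$. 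Since $c$ is reachable from $s_{init}$ we have $s_{init} \in R_c$ and $c \subseteq R_c$; therefore, on the part of the state space that can reach $c$ the outer and inner searches behave exactly like an ordinary nested DFS, so \textit{OuterDFS} visits and backtracks from every state of $c$, and \textit{InnerDFS} is launched from every accepting state of $R_c$ it backtracks from.

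On this pruning-free region I would then run the classical nested-DFS correctness argument. Among the accepting states lying on an accepting cycle contained in $R_c$ (non-empty, witnessed by $c$), let $q$ be the one from which \textit{OuterDFS} backtracks first; the call $\textit{InnerDFS}(q)$ then takes place with $q$ on $\mathit{Stack}$. Using Lemma~\ref{lemma:extendedPostorder} as the parametric analogue of the usual post-order property, plus the standard analysis of which states are already in $\mathit{Inner}$ at that moment, one obtains that either $\textit{InnerDFS}(q)$ or some earlier $\textit{InnerDFS}(q')$ with $q' \in R_c$ reaches a state $s'$ on $\mathit{Stack}$. In either case $s'$ reaches the accepting state from which that inner search was launched, which in turn reaches $c$, so $s' \in R_c$ and hence $v \in s'.\sem{C}$; therefore $v$ is added to $\mathit{Found}$ at that detection, contradicting $v \notin \mathit{Found}$ and proving the ``if'' direction.

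The step I expect to be the main obstacle is exactly this last one: faithfully transplanting the delicate invariant of nested DFS --- that when the inner search is launched from the first-backtracked relevant accepting state no state already placed in $\mathit{Inner}$ can reach it, and that in the remaining case a detection has already occurred --- into the present setting, where the $\mathit{Inner}$ colouring is shared across inner searches exactly as in the classical algorithm, but a detection no longer halts the computation: it merely enlarges $\mathit{Found}$ and the search continues. One has to argue that continuing past a detection cannot render a still-undetected cycle unreachable (the only extra blocking is the $\sem{C} \subseteq \mathit{Found}$ pruning, which by the $R_c \subseteq \{x : v \in x.\sem{C}\}$ observation never affects states relevant to $c$), and that every detection triggered from an accepting state inside $R_c$ genuinely contributes $v$ to $\mathit{Found}$; the monotonicity bookkeeping above is what makes both of these go through.
\qed
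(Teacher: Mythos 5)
Your soundness direction and the overall skeleton of the completeness direction match the paper's proof (construct the cycle from the two stacks plus Lemma~\ref{lemma:setAccepted}; for the converse, assume $v\notin\mathit{Accepted}$, use Lemma~\ref{lemma:mono} to see that the $\subseteq\mathit{Found}$ tests never prune states that can reach the cycle, and argue about a minimally chosen accepting state $q$). However, there is a genuine gap, and it is exactly the step you yourself flag as ``the main obstacle'': you assert that ``the standard analysis of which states are already in $\mathit{Inner}$'' yields that either $\mathit{InnerDFS}(q)$ or some earlier $\mathit{InnerDFS}(q')$ with $q'\in R_c$ hits a state on $\mathit{Stack}$, but you never prove this, and it is the heart of the lemma. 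Note in particular that the states of $\mathit{Inner}$ that block $\mathit{InnerDFS}(q)$ need not lie in $R_c$ at all (they may satisfy $v\notin s.\sem{C}$ and have been coloured by inner searches launched from accepting states unrelated to $c$), so your observation that the $\mathit{Found}$-pruning is harmless on $R_c$ does not dispose of them; and a detection occurring in an earlier inner search only helps you if the detected stack state's constraint set contains $v$, which again requires an argument.

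The paper closes this gap with a concrete case analysis that your plan only gestures at: take $q$ to be the first accepting state on a cycle under $v$ from which $\mathit{InnerDFS}$ is started; if that search is blocked, let $r$ be the first previously $\mathit{Inner}$-visited state reached from $q$ that lies on a cycle through $q$, and let $q'$ be the accepting state whose inner search first visited $r$ (so $\mathit{InnerDFS}(q')$ preceded $\mathit{InnerDFS}(q)$). If $q'$ is reachable from $q$, then $q'\leadsto r\leadsto q\leadsto q'$ is an accepting cycle, and since it passes through $q$ it is a cycle under $v$ by the corollary of Lemma~\ref{lemma:mono}; it was missed earlier, contradicting the minimality of $q$. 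If $q'$ is not reachable from $q$, monotonicity still gives $v\in q'.\sem{C}$ (because $q'\leadsto r\leadsto q$), so any cycle through $q'$ would again be a missed cycle under $v$; and if $q'$ lies on no cycle, Lemma~\ref{lemma:extendedPostorder} forces $\mathit{OuterDFS}$ to backtrack from $q$ before $q'$, i.e.\ $\mathit{InnerDFS}(q)$ would have started first --- contradicting the choice of $q'$. Without carrying out this $r$/$q'$ analysis (or an equivalent invariant argument adapted to the fact that detections do not halt the search and that $\mathit{Inner}$ is shared across inner searches), the completeness direction is not established; as written, your proposal is a plan with the decisive step missing.
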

\begin{proof}
Whenever the algorithm returns a set \textit{Accepted} containing $v$ there exists an~accepting cycle $c$ under $v$. 
Such an accepting cycle can be constructed using \textit{OuterDFS} and \textit{InnerDFS} search stack at the time of adding the valuation $v$ to the set \textit{Found}.

The difficult case is to show that whenever there exists an accepting cycle under $v$ in the given graph then the algorithm returns a~set \textit{Accepted} containing~$v$. 
Suppose an~accepting cycle under a~valuation $v$ exists in the given graph and the algorithm returns a set \textit{Accepted} such that $v \not\in \mathit{Accepted}$. 
  
Let $s_i$ be an initial state in the given graph. Notice that for each state $s$ such that $v \not \in s.\sem{C} $ it holds that if $s'$ is an ancestor of state $s$ then $v \not\in  s'.\sem{C}$ (using Lemma \ref{lemma:mono}).
Hence, using the assumption $v \not\in \mathit{Accepted}$  we get that the \textit{OuterDFS} procedure visits each state $s$ such that $v \not\in  s.\sem{C}$.

Let $q$ be the first accepting state on a~cycle under $v$ from which \textit{InnerDFS} is started. 
There are two cases:
\begin{itemize}
\item There exists a~path from a~state $q$ to some state on the stack of \textit{OuterDFS} and each state $s$ on the path is unvisited by \textit{InnerDFS} and $s.\sem{C} \not \subseteq \mathit{Found}$ at the time of starting  \textit{InnerDFS} from $q$.
\item For all paths from a~state $q$ to some state $p$ on the stack of \textit{OuterDFS} there exists a~state $s$ on the path such that $ s.\sem{C}  \subseteq \mathit{Found}$ or $s$ is a state already visited by \textit{InnerDFS}.
\end{itemize}

For the first case the algorithm will detect an accepting cycle as expected and will add the valuation $v \in q.\sem{C}$ to the set \textit{Found}.
 From Lemma \ref{lemma:setAccepted} we get $v \in \mathit{Accepted}$ and we have reached a~contradiction with the assumption $v \not\in \mathit{Accepted}$.

For the second case,
whenever the path $q \leadsto p$ contains a~state $s$ such that $v \in  s.\sem{C}  \subseteq \mathit{Found}$ we reach contradiction (using Lemma \ref{lemma:setAccepted}).
Assume that for each state $s$ on  path $q \leadsto p$ it holds that $v \not \in s.\sem{C}$. 
Let $r$ be the first visited state that is reached from $q$ during \textit{InnerDFS} and is on a~cycle through $q$.  
Let $q'$ be an~accepting state that started \textit{InnerDFS} in which $r$ was visited for the first time. 
Notice the fact that \textit{InnerDFS} was started from $q'$ before starting from $q$.
There are two cases:
\begin{itemize}
\item The state $q'$ is reachable from $q$. 
Then there is an~accepting cycle $c' = q' \leadsto r \leadsto q \leadsto q'$.  
If $c'$ contains a~state $s$ such that $v \in s.\sem{C}  \subseteq \mathit{Found}$ we reach a~contradiction using Lemma \ref{lemma:setAccepted}. 
Suppose there is no state $s$ with $v \in  s.\sem{C}  \subseteq \mathit{Found}$ on the cycle $c'$. 
The cycle $c'$ was not found previously. 
However, this contradicts our assumption that $q$ is the first accepting state from which we missed a cycle.

\item The state $q'$ is not reachable from $q$. 
Notice the fact that $v \in q'.\sem{C}$ (this follows from Lemma
\ref{lemma:mono}) and therefore every cycle containing the state $q'$ is a~cycle
under $v$. 
 If $q'$ appears on a cycle, then an~accepting cycle under $v$ was missed before starting \textit{InnerDFS} from $q$, contrary to our assumption.
If $q'$ does not apper on a cycle then by Lemma \ref{lemma:extendedPostorder} we backtracked from $q$ in the \textit{OuterDFS} before backtracking from $q'$ and therefore \textit{InnerDFS} started from $q$ before starting from $q'$. 
We have reached a~contradiction with the fact that  \textit{InnerDFS} started from $q'$ before  starting from $q$.
\end{itemize}
\end{proof}

\begin{lemma}
\label{lemma:termination}
The CumulativeNDFS algorithm always terminates.
\end{lemma}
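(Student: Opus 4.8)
The plan is to derive termination from two observations: that the input graph $G=\sem{A}^{\alpha}$ is finite — which holds whenever $\alpha$ is a finite abstraction, in particular for the pk-extrapolation (Lemma~\ref{lemma:extraFiniteness}) — and that the bookkeeping sets $Outer$, $Inner$ and \textit{Found} are only ever enlarged, never shrunk, during a run. Since $G$ is a finite B\"{u}chi automaton, $\mathbb Q_A$ is finite and every state has only finitely many successors, so the body of a single invocation of \textit{OuterDFS} or \textit{InnerDFS} — disregarding its recursive calls — performs only finitely much work in its foreach loop.

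First I would bound the outer search. The procedure \textit{OuterDFS$(s)$} adds $s$ to $Outer$ as its very first action, and a recursive call \textit{OuterDFS$(s')$} is guarded by $s'\notin Outer$; since $Outer$ never shrinks, each state of $\mathbb Q_A$ is the argument of \textit{OuterDFS} at most once, so the recursion tree of the outer search has at most $|\mathbb Q_A|$ nodes and halts. Next I would bound the inner searches. Analogously, \textit{InnerDFS$(s')$} adds $s'$ to $Inner$ first, and a \emph{recursive} call \textit{InnerDFS$(s')$} is guarded by $s'\notin Inner$; as $Inner$ never shrinks, each state is the argument of such a fresh recursive call at most once over the whole run, so there are at most $|\mathbb Q_A|$ of them. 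The only remaining invocations of \textit{InnerDFS} are the top-level ones issued by \textit{OuterDFS} when it backtracks from an accepting state; there is at most one such per invocation of \textit{OuterDFS}, hence at most $|\mathbb Q_A|$ of them. Altogether \textit{InnerDFS} is invoked at most $2|\mathbb Q_A|$ times, each invocation doing finitely much work, so all inner searches halt.

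Finally I would remark that the monotone growth of \textit{Found} cannot disturb this argument: enlarging \textit{Found} can only make guards of the form $s.\sem{C}\subseteq \textit{Found}$ become true, which only prunes the searches further and never re-enables a previously abandoned branch. Combining the bounds above, the total number of procedure invocations is finite and each terminates, so \textit{CumulativeNDFS} terminates. The only mildly delicate point is the accounting for the top-level \textit{InnerDFS} calls, which — unlike the recursive ones — are not blocked by the $Inner$ set and so must be bounded via the number of \textit{OuterDFS} backtracks rather than by $|\mathbb Q_A|$ directly; but since \textit{OuterDFS} is invoked at most $|\mathbb Q_A|$ times, this bound is immediate.
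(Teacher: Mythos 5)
Your proof is correct and follows essentially the same route as the paper's: the graph $\sem{A}^{\alpha}$ is finite, and the monotone growth of $Outer$ and $Inner$ (with the membership guards on recursive calls) bounds the number of procedure invocations, each of which does finite work. In fact your accounting is slightly more careful than the paper's one-line argument, since you separately bound the top-level \textit{InnerDFS} calls (which are not guarded by $Inner$) by the number of \textit{OuterDFS} invocations, a detail the paper glosses over.
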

\begin{proof}
From the fact that the number of vertices is finite we get that the size of the sets \textit{Inner} and \textit{Outer} is bouned.
Each invocation of \textit{InnerDFS} (\textit{OuterDFS}) procedure increases the size of the set \textit{Inner} (\textit{Outer}). 
Hence, the \textit{CumulativeNDFS} algorithm cannot proceed infinitely due to the upper bound on the size of the set \textit{Inner} and \textit{Outer}.
\end{proof}

\begin{reftheorem}[Theorem~\ref{theorem:CumulativeNDFSCorrectness}] 
Let $A$ be a~PTBA and $\alpha$ an~abstraction over $\sem{A}$. 
A~parameter valuation $v$ is contained in the output of the \textit{CumulativeNDFS($\sem{A}^{\alpha}$)} if and only if 
there exists an~accepting run respecting $v$ of $\sem{A}^{\alpha}$.
\end{reftheorem}
\begin{proof}
By Lemma \ref{lemma:termination} the algorithm is guaranteed to terminate returning the set \textit{Accepted}. 
The partial correctness, the $\Rightarrow$ case: By Lemma \ref{lemma:partialCorrectness} for each $v \in \mathit{Accepted}$ there exists an~accepting cycle under $v$ and for each $v \not\in \mathit{Accepted}$ there is no accepting cycle under $v$.
The partial correctness, the $\Leftarrow$ case: Analogously.
\end{proof}

\end{document}